\documentclass[journal]{IEEEtran}

\usepackage{amsmath, mathtools, mathrsfs, dsfont, cases}

\usepackage{algorithm}
\usepackage{algorithmic}

\usepackage{adjustbox}

\usepackage{amsthm}
\usepackage{amssymb}
\usepackage{cite, url}

\usepackage{booktabs, tabu, array, makecell}
\usepackage{multirow, multicol}
\usepackage{setspace}
\usepackage[table, dvipsnames]{xcolor}  
\newcolumntype{?}{!{\vrule width 1.2pt}}

\usepackage{graphicx, epstopdf, epsfig}
\usepackage{float}
\usepackage{subcaption}
\usepackage[caption = false]{subfig}
\usepackage[font=small, labelfont=bf, justification=centering]{caption}
\usepackage{stfloats}
\usepackage{tikz}
\usetikzlibrary{arrows.meta, positioning, shapes.geometric, fit, calc, shapes.multipart, tikzmark}

\usepackage{adjustbox, stackengine, verbatim, enumitem, boldline, bm}
\usepackage[utf8]{inputenc}
\pagenumbering{arabic}

\usepackage{adjustbox, stackengine, verbatim, enumitem, boldline, bm}
\usepackage[utf8]{inputenc}
\pagenumbering{arabic}


\renewcommand\qedsymbol{$\blacksquare$}

\newtheorem{remark}{Remark}
\newtheorem{theorem}{Theorem}[section]

\newtheorem{corollary}[theorem]{Corollary}

\usetikzlibrary{arrows.meta, positioning}

\usepackage{placeins}
\usetikzlibrary{tikzmark}

\usepackage[english]{babel}

\tikzset{mycircled/.style={circle,draw,inner sep=0.1em,line width=0.1em}}

\definecolor{bb}{rgb}{0.2941, 0.5447, 0.7494}
\definecolor{orange}{rgb}{1, 0.347, 0}
\definecolor{greenJ}{rgb}{0, 0.6590, 0.42}
\definecolor{zereshk}{rgb}{0.588,0,0.098}
\definecolor{Maroon}{rgb}{0.502, 0, 0}
\definecolor{Brown}{rgb}{0.588, 0.294, 0}
\definecolor{Olive}{rgb}{0.502, 0.502, 0}
\definecolor{Navy}{rgb}{0, 0, 0.502}
\definecolor{Orange}{rgb}{1,0.647,0}
\definecolor{Yellow}{rgb}{0.502, 1, 0}
\definecolor{Green}{rgb}{0, 0.502,  0}
\definecolor{Blue}{rgb}{0, 0,0.761}
\definecolor{Lime}{rgb}{0.196, 0.804, 0.196}
\definecolor{Purple}{rgb}{0.502,0,0.502}
\definecolor{Violet}{rgb}{0.561,0,1}
\definecolor{Magneta}{rgb}{1,0,1}
\definecolor{Red}{rgb}{1,0,0}
\definecolor{Abi}{rgb}{0.059, 0.322, 0.729}
\definecolor{Gray}{rgb}{0.502, 0.502, 0.502}
\definecolor{um}{rgb}{0.0824, 0.1294, 0.4196}
\definecolor{abikam}{rgb}{0.51, 0.93,  0.992}
\definecolor{abizeyad}{rgb}{0.16, 0.573,0.761}
\definecolor{rr}{rgb}{0.9047, 0.1918, 0.1988}

\definecolor{orange2}{rgb}{0.85,0.33,0.10}
\definecolor{yellow4}{rgb}{0.93,0.69,0.13}
\definecolor{purple6}{rgb}{0.49,0.18,0.56}
\definecolor{green32}{rgb}{0.47,0.67,0.19}
\definecolor{blue13}{rgb}{0.30,0.75,0.93}
\definecolor{zereshk16}{rgb}{0.64,0.08,0.18}
\definecolor{orange25}{rgb}{0.93,0.69,0.13}
\definecolor{gray27}{rgb}{0.0824, 0.1294, 0.4196}

\definecolor{storageBlue}{rgb}{0, 0.4470, 0.7410}   
\definecolor{storageOrange}{rgb}{0.8500, 0.3250, 0.0980} 
\definecolor{storageYellow}{rgb}{0.9290, 0.6940, 0.1250} 
\definecolor{storagePurple}{rgb}{0.4940, 0.1840, 0.5560} 
\definecolor{storageGreen}{rgb}{0.4660, 0.6740, 0.1880} 

\title{Real-Time Defense Against Coordinated Cyber-Physical Attacks: A Robust Constrained Reinforcement Learning Approach}
\IEEEaftertitletext{\vspace{-2.6\baselineskip}}
\allowdisplaybreaks
\author{Saman~Mazaheri~Khamaneh,~\IEEEmembership{Student~Member,~IEEE,}
        Tong~Wu,~\IEEEmembership{Member,~IEEE,} ~Wei~Sun,~\IEEEmembership{Senior Member,~IEEE}, and ~Cong Chen,~\IEEEmembership{Member,~IEEE} 
}

\newcommand{\abs}[1]{\lvert #1 \rvert}
\begin{document}
\maketitle

\begin{abstract}
Modern power systems face increasing vulnerability to sophisticated cyber-physical attacks beyond traditional \textit{N-1} contingency frameworks. Existing security paradigms face a critical bottleneck: efficiently identifying worst-case scenarios and rapidly coordinating defensive responses are hindered by intensive computation and time delays, during which cascading failures can propagate.
This paper presents a novel tri-level robust constrained reinforcement learning (RCRL) framework for robust power system security. The framework generates diverse system states through AC-OPF formulations, identifies worst-case \textit{N-K} attack scenarios for each state, and trains policies to mitigate these scenarios across all operating conditions without requiring predefined attack patterns. The framework addresses constraint satisfaction through Beta-blending projection-based feasible action mapping techniques during training and primal-dual augmented Lagrangian optimization for deployment. Once trained, the RCRL policy learns how to control observed cyber-physical attacks in real time.
Validation on IEEE benchmark systems demonstrates effectiveness against coordinated \textit{N-K} attacks, causing widespread cascading failures throughout the network. The learned policy can successfully respond rapidly to recover system-wide constraints back to normal ranges within 0.21 ms inference times, establishing superior resilience for critical infrastructure protection.
\end{abstract}

\begin{IEEEkeywords}
Robust Constrained Reinforcement Learning, Cyber-Physical Security, Tri-Level Optimization.
\end{IEEEkeywords}
\vspace{-3mm}

\section{Introduction} \label{sec:introduction}
\vspace{-1mm}
\subsection{Background}
Modern power system networks are increasingly vulnerable to sophisticated cyber-physical attacks due to their growing dependence on digital communication and control systems. Advanced grid technologies such as wide-area monitoring systems   and automated protection schemes, while improving operational efficiency, have significantly expanded the attack surface available to malicious actors. A prime example is the 2015-2016 Ukraine cyberattacks, which demonstrated how coordinated targeting of transmission substations and generation units can trigger cascading failures that impact millions of people~\cite{sullivan2017}. These incidents reveal fundamental inadequacies in conventional security paradigms. Existing approaches rely on independent failure assumptions and cannot address coordinated, intelligent adversarial behavior targeting transmission assets~\cite{khamaneh2025robust, wang2013}. Consequently, addressing these emerging threats requires adaptive defense mechanisms capable of real-time response to worst-case coordinated attack scenarios.

Current industry practices rely on \textit{N-1} contingency analysis, which assumes random and independent component failures~\cite{majidi2015integration}. This approach is fundamentally inadequate for coordinated cyber-physical attacks where disruptions are neither independent nor random~\cite{du2019achieving}. Furthermore, \textit{N-1}-based methods face three critical limitations: (i) computational intensity prevents real-time implementation under adversarial conditions~\cite{donti2021adversarially, mohammadi2016agent}, (ii) reactive control strategies suffer from optimization delays that result in control actions being applied to evolved system states rather than original conditions, extending outage durations~\cite{wu2023constrained}, and (iii) traditional strategies rely on generator inertia, which is diminishing in modern transmission systems with increased renewable penetration~\cite{gordon2021impact}. These limitations necessitate proactive, learning-based frameworks capable of real-time adaptation to evolving threats. 

\vspace{-0.3cm}
\subsection{Related Work}
To address these challenges, existing approaches to cyber-physical attacks, including classical optimization methods and learning-based algorithms, can be organized into the following:
\subsubsection{Power System Security Against Cyber-Physical Attacks}
Power system cybersecurity research has evolved from foundational threat identification~\cite{sridhar2011cyber} to sophisticated defense mechanisms against coordinated adversarial attacks. Current approaches fall into two categories: reactive cyber-attack mitigation and proactive security-constrained optimization. Reactive mitigation strategies include wide-area monitoring systems~\cite{ashok2017cyber}, game-theoretic attack-defense analysis~\cite{lakshminarayana2019performance,lakshminarayana2017optimal}, and coordinated risk mitigation frameworks~\cite{zhang2021cyber}. Proactive approaches focus on \textit{N-K} security-constrained optimal power flow (SCOPF) formulations~\cite{lai2019tri}, large-scale bilevel \textit{N-K} SCOPF using adversarial robustness~\cite{agarwal2025large}, and transferability-oriented adversarial robust methods~\cite{zuo2024transferability}. While these approaches have advanced power system security, they primarily rely on predetermined attack scenarios and face computational limitations preventing real-time deployment.

\subsubsection{Reinforcement Learning Approaches for Cyber-Physical Systems}
Reinforcement learning has emerged as a powerful paradigm for enhancing cyber-physical system resilience, particularly in power networks. Pioneering work by \cite{wei2019cyber} introduced deep RL for cyber-attack recovery, followed by \cite{roberts2021deep} who developed specialized approaches for voltage unbalance attacks. \cite{arnold2022adam} advanced Adam-based augmented random search for attack scenarios, while \cite{wu2022reinforcement} introduced physics-inspired graph convolutional networks that leverage power network structure. However, these approaches primarily consider simple random attack scenarios and cannot guarantee system recovery or constraint satisfaction during operation.

To address these limitations, constrained reinforcement learning (CRL) has emerged as a critical framework for power systems requiring strict safety and operational constraints~\cite{su2025review}. Unlike conventional RL that focuses solely on reward maximization, CRL explicitly handles dual objectives of performance optimization while ensuring constraint satisfaction. In power systems, these constraints encompass physical limits (voltage bounds, thermal ratings) and operational requirements (power balance, reserve margins)~\cite{wu2023constrained}. The integration of constraint-handling mechanisms enables RL to navigate complex trade-offs between economic efficiency and system security~\cite{su2025safe}. 
Nevertheless, existing CRL methods face significant challenges under worst-case cyber-physical attack scenarios, often struggling to provide feasible training environments as worst-case scenarios can render constraint sets infeasible, preventing effective policy learning~\cite{coulson2021distributionally, yu2021robust}.

\vspace{-0.3cm}
\subsection{Contributions}
To address these critical challenges, this paper proposes a robust constrained reinforcement learning (RCRL) framework that enables real-time defense against cyber-physical attacks in power systems. The main contributions of this work are:
\begin{enumerate}
\item \textbf{Dynamic Worst-Case Attack Defense:} We develop a tri-level optimization approach that dynamically identifies worst-case \textit{N-K} attack scenarios, coordinating real-time defensive strategies against large-scale coordinated attacks. Unlike static approaches that assume fixed conditions, our method adapts to varying operational states for effective real-time mitigation.
    
\item \textbf{Robust Constrained RL Methodology:} Our method ensures training and deployment safety through novel beta-blending projection-based feasible action mapping during training and primal-dual augmented Lagrangian optimization for deployment. This maintains strict constraint satisfaction under adversarial uncertainty while enabling real-time decision-making.
    
\item \textbf{Theoretical Safety and Robust Guarantees:} We provide rigorous mathematical proofs establishing convergence guarantees and constraint satisfaction bounds, ensuring that learned policies maintain system states within safe regions while achieving near-optimal performance under worst-case attack scenarios.
\end{enumerate}

The remainder of this paper is structured as follows: Section~\ref{sec:problem_formulation} formulates the tri-level optimization framework, Section~\ref{sec:methodology} develops the RCRL approach, Section~\ref{sec:results} provides experimental validation, and Section~\ref{sec:conclusion} concludes.

\begin{figure*}[t]
\centering
\includegraphics[width=0.98\textwidth]{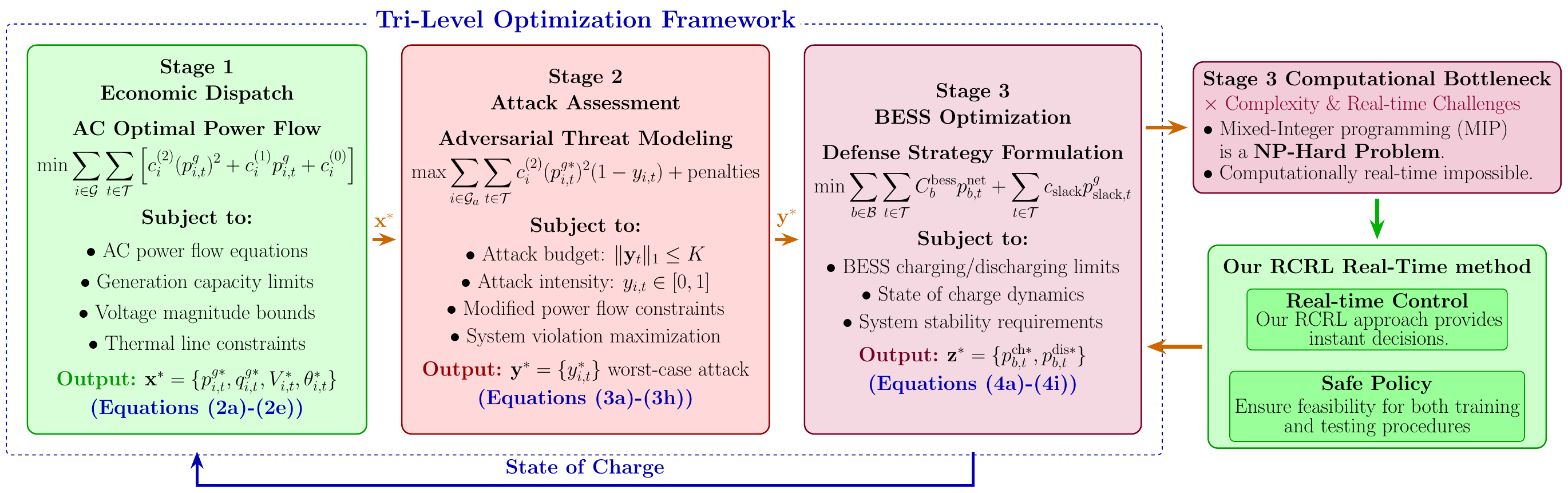}
\caption{Tri-level framework for power system cyber-physical defense}
\label{fig:framework}
\vspace{-0.6cm}
\end{figure*}

\section{Problem Formulation} \label{sec:problem_formulation}

This section formulates the tri-level optimization framework for power system cyber-physical security. As shown in Figure~\ref{fig:framework}, our real-time approach employs three sequential stages: (i) AC-OPF under normal operating conditions to establish baseline generation schedules, (ii) adversarial attack analysis (AAA) to identify worst-case attacks on generators, and (iii) battery energy storage system (BESS) coordination for real-time attack mitigation.

\vspace{-5mm}
\subsection{Mathematical Framework Overview}
We formulate the three-stage problem as a sequential decision process over a power system network characterized by the following sets: $\mathcal{N} = \{1, 2, \ldots, N\}$ represents system buses, $\mathcal{G} = \{1, 2, \ldots, G\}$ denotes generators, and $\mathcal{L} = \{1, 2, \ldots, L\}$ indicates transmission lines, where $\mathcal{G}_i \subseteq \mathcal{G}$ specifies the generators connected to bus $i$. Additionally, $\mathcal{G}_a \subseteq \mathcal{G}$ identifies vulnerable generators susceptible to cyber-physical attacks, $\mathcal{B} = \{1, 2, \ldots, B\}$ represents  BESS, and $\mathcal{T} = \{1, 2, \ldots, T\}$ defines the operational time horizon. 
 
The framework employs three decision vectors: AC-OPF variables $\mathbf{x} := \{p_{i,t}^g, q_{i,t}^g, V_{i,t}, \theta_{i,t}\}$ where $p_{i,t}^g, q_{i,t}^g$ are active/reactive power generation and $V_{i,t}, \theta_{i,t}$ are voltage magnitudes/angles; attack variables $\mathbf{y} := \{y_{i,t}\}$ where $y_{i,t}$ represents attack intensity on generator $i$; and BESS variables $\mathbf{z} := \{p_{b,t}^{\mathrm{ch}}, p_{b,t}^{\mathrm{dis}}, q_{b,t}^{\mathrm{bess}}, \mathrm{soc}_{b,t}\}$ where these denote charging/discharging power, reactive power, and state of charge (SOC). 
The three-stage   optimization framework is:
\begin{subequations} \label{eq:sequential_framework}
\begin{align}
\mathcal{S}_1(\mathcal{X}): \quad &\mathbf{x}^* \leftarrow \arg\min_{\mathbf{x} \in \mathcal{X}} J_1(\mathbf{x}) \label{eq:alg_stage1} \\
\mathcal{S}_2(\mathbf{x}^*): \quad &\mathbf{y}^* \leftarrow \arg\max_{\mathbf{y} \in \mathcal{Y}} J_2(\mathbf{x}^*, \mathbf{y}) \label{eq:alg_stage2} \\
\mathcal{S}_3(\mathbf{x}^*, \mathbf{y}^*): \quad &\mathbf{z}^* \leftarrow \arg\min_{\mathbf{z} \in \mathcal{Z}} J_3(\mathbf{x}^*, \mathbf{y}^*, \mathbf{z}) \label{eq:alg_stage3}
\end{align}
\end{subequations}
where $J_1(\mathbf{x})$, $J_2(\mathbf{x}^*, \mathbf{y})$, and $J_3(\mathbf{x}^*, \mathbf{y}^*, \mathbf{z})$ represent economic dispatch cost minimization, attack impact maximization, and defense cost minimization, respectively.

\vspace{-5mm}

\subsection{Stage 1: Economic Dispatch with AC Optimal Power Flow}\label{sec:OPF_normal_condition}
The first stage establishes baseline generation dispatch by solving an AC-OPF problem under normal operating conditions, corresponding to $\mathcal{S}_1$ in \eqref{eq:sequential_framework}. The system is characterized by power demands $P_{i,t}^d, Q_{i,t}^d \in \mathbb{R}_+$ at each bus, admittance matrix elements $G_{ij}, B_{ij} \in \mathbb{R}$ where $Y_{ik} = G_{ik} + jB_{ik}$ with magnitude $\abs{Y_{ik}}$ and angle $\alpha_{ik}$, and economic parameters including generator cost coefficients $c_i^{(2)}, c_i^{(1)}, c_i^{(0)} \in \mathbb{R}_+$. 

Operational constraints are defined by generation limits $p_i^{g,\min}, p_i^{g,\max} \in \mathbb{R}_+$ and $q_i^{g,\min}, q_i^{g,\max} \in \mathbb{R}$, voltage bounds $V_i^{\min}, V_i^{\max} \in \mathbb{R}_+$, and thermal capacity limits $S_{ij}^{\max} \in \mathbb{R}_+$ for transmission lines. The apparent power flow $S_{ij,t} \in \mathbb{C}$ represents the complex power flow on line $(i,j)$. The AC-OPF formulation minimizes the total generation cost as follows:
\begin{subequations} \label{mod:stage1_opf}
\setlength{\jot}{2pt}
\begin{align}
&\min_{\mathbf{x}} J_1(\mathbf{x}) = \sum_{i \in \mathcal{G}} \sum_{t \in \mathcal{T}} \left(c_i^{(2)}(p_{i,t}^g)^2 + c_i^{(1)}p_{i,t}^g + c_i^{(0)}\right) \label{obj:stage1_cost} \\
&\text{subject to:} \nonumber \\
&\sum_{j \in \mathcal{G}_i} p_{j,t}^g - P_{i,t}^d = P_{i,t}^{\text{flow}}(V, \theta) \label{cons:stage1_P_balance}\\
&\sum_{j \in \mathcal{G}_i} q_{j,t}^g - Q_{i,t}^d = Q_{i,t}^{\text{flow}}(V, \theta) \label{cons:stage1_Q_balance}\\
&p_i^{g,\min} \leq p_{i,t}^g \leq p_i^{g,\max}, q_i^{g,\min} \leq q_{i,t}^g \leq q_i^{g,\max} \label{cons:stage1_gen_bounds} \\
&V_i^{\min} \leq V_{i,t} \leq V_i^{\max}, |S_{ij,t}| \leq S_{ij}^{\max} \label{cons:stage1_network_limits}\\
&\forall i \in \mathcal{N}, j \in \mathcal{G}, (i,j) \in \mathcal{L}, t \in \mathcal{T} \nonumber
\end{align}
\end{subequations}
where the AC power flow functions $P_{i,t}^{\text{flow}}(V, \theta)$ and $Q_{i,t}^{\text{flow}}(V, \theta)$ represent feasible network power injections determined by voltage variables as defined in equations \eqref{cons:ac_pflow}--\eqref{cons:ac_qflow}.
\begin{subequations} \label{eq:ac_power_flow_set}
\begin{align}
P_{i,t}^{\text{flow}}(V, \theta) &:= V_{i,t} \sum_{k \in \mathcal{N}} V_{k,t} \abs{Y_{ik}} \cos(\theta_{i,t} - \theta_{k,t} - \alpha_{ik}) \label{cons:ac_pflow} \\
Q_{i,t}^{\text{flow}}(V, \theta) &:= V_{i,t} \sum_{k \in \mathcal{N}} V_{k,t} \abs{Y_{ik}} \sin(\theta_{i,t} - \theta_{k,t} - \alpha_{ik}) \label{cons:ac_qflow}
\end{align}
\end{subequations}
The feasible set $\mathcal{X}$ is defined by constraints \eqref{cons:stage1_P_balance}--\eqref{cons:stage1_network_limits}. Constraints \eqref{cons:stage1_P_balance} and \eqref{cons:stage1_Q_balance} enforce Kirchhoff's current law by requiring that net power injections at each bus satisfy the AC power flow equations. Constraint \eqref{cons:stage1_gen_bounds} enforces generator operational limits, while constraint \eqref{cons:stage1_network_limits} ensures voltage operating ranges and thermal capacity limits are satisfied. The solution yields an optimal dispatch $\mathbf{x}^*$ that establishes the baseline operating point for subsequent attack assessment.

\subsection{Stage 2: Adversarial Attack Assessment (AAA) Model} \label{sec:AAA_model}
\subsubsection{The Worst-Case Attack Model}
We consider sophisticated adversaries capable of launching coordinated cyber-physical attacks against power generation infrastructure. These attacks involve compromising generator control systems to forcibly reduce power output, with adversaries dynamically selecting target generators based on real-time system conditions and optimal dispatch information from Stage 1. The attack strategy adapts continuously to maximize system-wide disruption given the current operational state, with primary objectives including escalating operational costs through forced dispatch adjustments, violating voltage magnitude constraints at critical buses, and overloading transmission lines beyond their thermal capacity limits.
Under the worst-case scenario, adversaries maintain persistent control over compromised generators without immediate detection or intervention, enabling them to execute sustained adaptive attacks that exploit time-varying system vulnerabilities. This dynamic threat model captures the most sophisticated attack scenarios where adversaries continuously adjust their targeting strategy based on evolving system conditions. By designing our defensive framework against such intelligent, adaptive adversaries, we ensure robust protection against a wide spectrum of real-world cyber-physical threats.

\subsubsection{Attack Impact Maximization Formulation}
Building on the worst-case attack model, we define the attack decision vector $\mathbf{y} := \{y_{i,t}\}$ with continuous variables $y_{i,t} \in [0,1]$ representing attack intensity on generator $i$ at time $t$, targeting generators in set $\mathcal{G}_a \subseteq \mathcal{G}$ co-located with BESS units. Following coordinated attacks, swing bus generation can typically provide rapid power adjustments to counteract frequency deviations from sudden generator losses~\cite{zhao2012swing}. However, the resulting system state may violate voltage bounds and power flow limit constraints, rendering them infeasible. Consequently, we relax generator bound constraints~\eqref{cons:stage1_gen_bounds} and network limit constraints~\eqref{cons:stage1_network_limits} in the subsequent formulation.

To streamline the mathematical presentation, we define the slack generation cost function as 
\begin{equation}
C^{\text{slk}}(p^g_{\mathrm{slack},t}) := c_{\mathrm{slack}}^{(2)}(p^g_{\mathrm{slack},t})^2 + c_{\mathrm{slack}}^{(1)}p^g_{\mathrm{slack},t} + c_{\mathrm{slack}}^{(0)}, \label{slack_cost}
\end{equation}
where $c_{\mathrm{slack}}^{(2)}, c_{\mathrm{slack}}^{(1)}, c_{\mathrm{slack}}^{(0)} \in \mathbb{R}_+$ are the corresponding cost coefficients. The adversary solves the following optimization problem to maximize system disruption:
\begin{subequations} \label{mod:AAA}
\setlength{\jot}{0pt}
\begin{align}
&\max_{\mathbf{y}} J_2(\mathbf{x}^*, \mathbf{y}) = \nonumber \\
\begin{split}
&\sum_{i \in \mathcal{G}_a} \sum_{t \in \mathcal{T}} \left(c_i^{(2)}(p_{i,t}^{g*}(1-y_{i,t}))^2 + c_i^{(1)}p_{i,t}^{g*}(1-y_{i,t}) + c_i^{(0)}\right) \\
&+  \sum_{t \in \mathcal{T}} C^{\text{slk}}(p^g_{\mathrm{slack},t}) + {\xi}_1 \inf_{(i,j) \in \mathcal{L}, t \in \mathcal{T}} \Psi_{i,j,t}   + {\xi}_2 \inf_{i \in \mathcal{N}, t \in \mathcal{T}} \Omega_{i,t}
\end{split} \label{obj:obj_attack} \\
&\text{subject to:}  \nonumber \\
\begin{split}
&\sum_{j \in \mathcal{G}_i} p_{j,t}^{g*}(1-y_{j,t}) + \delta_i^{\mathrm{slack}} p^g_{\mathrm{slack},t} - P_{i,t}^d  = P_{i,t}^{\text{flow}}(V, \theta)
\end{split} \label{cons:attack_p_balance} \\
\begin{split}
&\sum_{j \in \mathcal{G}_i} q_{j,t}^{g*}(1-y_{j,t}) + \delta_i^{\mathrm{slack}} q^g_{\mathrm{slack},t} - Q_{i,t}^d   = Q_{i,t}^{\text{flow}}(V, \theta)
\end{split} \label{cons:attack_q_balance} \\
\begin{split}
&\Psi_{i,j,t} = \max\{S_{i,j,t} - S_{ij}^{\max}, -S_{i,j,t} - S_{ij}^{\max}, 0\}
\end{split} \label{cons:attack_line_violations} \\
&\Omega_{i,t} = \max\{V_{i,t} - V_i^{\max}, V_i^{\min} - V_{i,t}, 0\} \label{cons:attack_voltage_violations} \\
\begin{split}
&p_{\mathrm{slack}}^{\min} \leq p^g_{\mathrm{slack},t} \leq p_{\mathrm{slack}}^{\max},  q_{\mathrm{slack}}^{\min} \leq q^g_{\mathrm{slack},t} \leq q_{\mathrm{slack}}^{\max}
\end{split} \label{cons:slack_bounds} \\
&\sum_{i \in \mathcal{G}_a} y_{i,t} \leq K \label{cons:attack_max_simultaneous}\\
&\forall i \in \mathcal{N}, j \in \mathcal{G}, (i,j) \in \mathcal{L}, t \in \mathcal{T} \nonumber
\end{align}
\end{subequations}
The feasible set $\mathcal{Y}$ is defined by constraints \eqref{cons:attack_p_balance}--\eqref{cons:attack_max_simultaneous}, where the objective function \eqref{obj:obj_attack} maximizes system disruption through economic impact from removing low-cost generation and forcing expensive slack generation, plus operational stress via constraint violation penalties weighted by parameters $\xi_1, \xi_2 \in \mathbb{R}_+$ for transmission line and voltage violations, respectively. The coordination constraint \eqref{cons:attack_max_simultaneous} limits the total attack intensity to $K$, representing practical coordination limitations in multi-target cyber-physical attack scenarios.

Power flow constraints \eqref{cons:attack_p_balance}--\eqref{cons:attack_q_balance} incorporate attack effects through reduction factors $(1-y_{j,t})$, while constraints \eqref{cons:attack_line_violations} and \eqref{cons:attack_voltage_violations} define line flow violation measures $\Psi_{i,j,t}$ and voltage violation measures $\Omega_{i,t}$ to quantify maximum constraint violations across the system. This formulation yields the worst-case attack vector $\mathbf{y}^*$ for Stage 3 defensive  control.

\vspace{-3mm}
\subsection{Stage 3: BESS-Based Attack Mitigation}\label{sec:Attack_min}

The third stage implements defensive actions through fast-response BESS to mitigate the worst-case attacks identified in Stage 2, corresponding to $\mathcal{S}_3$ in \eqref{eq:sequential_framework}. Given the fixed attack strategy $\mathbf{y}^*$ from Stage 2, the defender determines an optimal BESS control strategy $\mathbf{z}^*$ to minimize operational costs while maintaining system constraints.

The BESS framework employs several decision variables: charging and discharging power $p_{b,t}^{\mathrm{ch}}, p_{b,t}^{\mathrm{dis}} \geq 0$, reactive power injection $q_{b,t}^{\mathrm{bess}} \in \mathbb{R}$, and SOC $\mathrm{soc}_{b,t} \in [0,1]$. Binary variables $\beta_{b,t}^{\mathrm{ch}}, \beta_{b,t}^{\mathrm{dis}} \in \{0,1\}$ enforce mutually exclusive charging and discharging modes, while the given connectivity matrix $\mu_{i,b} \in \{0,1\}$ indicates whether BESS unit $b$ connects to bus $i$. The net power injection $p_{b,t}^{\mathrm{net}} := p_{b,t}^{\mathrm{dis}} - p_{b,t}^{\mathrm{ch}}$ is positive for discharging and negative for charging.
Key operational parameters include charging/discharging efficiencies $\eta_b^{\mathrm{ch}}, \eta_b^{\mathrm{dis}} \in (0,1]$, maximum energy capacity $E_b^{\max} \in \mathbb{R}_+$, power limits $p_b^{\mathrm{ch,max}}, p_b^{\mathrm{dis,max}} \in \mathbb{R}_+$, reactive power bounds $q_b^{\mathrm{bess,min}}, q_b^{\mathrm{bess,max}} \in \mathbb{R}$, SOC limits $\mathrm{soc}_b^{\min}, \mathrm{soc}_b^{\max} \in [0,1]$, and time step duration $\Delta t \in \mathbb{R}_+$.  The BESS operational costs are characterized by coefficients $C_b \in \mathbb{R}_+$.
The defense cost minimization objective $J_3(\mathbf{x}^*, \mathbf{y}^*, \mathbf{z})$ is formulated as:
\begin{subequations} \label{mod:stage3_bess}
\setlength{\jot}{0pt}
\begin{align}
&\min_{\mathbf{z}} J_3(\mathbf{x}^*, \mathbf{y}^*, \mathbf{z}) = \sum_{t \in \mathcal{T}} C^{\text{slk}}(p^g_{\mathrm{slack},t}) + \sum_{b \in \mathcal{B}} \sum_{t \in \mathcal{T}} C_b p_{b,t}^{\mathrm{net}} \nonumber \\
&\quad + \xi_1 \sup_{(i,j) \in \mathcal{L}, t \in \mathcal{T}} \Psi_{i,j,t} + \xi_2 \sup_{i \in \mathcal{N}, t \in \mathcal{T}} \Omega_{i,t} \label{obj:stage3_cost} \\
&\text{subject to:}  \nonumber \\
\begin{split}
&\sum_{j \in \mathcal{G}_i} p_{j,t}^{g*}(1-y_{j,t}^*) + \sum_{b \in \mathcal{B}} \mu_{i,b} p_{b,t}^{\mathrm{net}} \\
&\quad + \delta_i^{\mathrm{slack}} p^g_{\mathrm{slack},t} - P_{i,t}^d = P_{i,t}^{\text{flow}}(V, \theta)
\end{split} \label{cons:stage3_p_balance} \\
\begin{split}
&\sum_{j \in \mathcal{G}_i} q_{j,t}^{g*}(1-y_{j,t}^*) + \sum_{b \in \mathcal{B}} \mu_{i,b} q_{b,t}^{\mathrm{bess}} \\
&\quad + \delta_i^{\mathrm{slack}} q^g_{\mathrm{slack},t} - Q_{i,t}^d = Q_{i,t}^{\text{flow}}(V, \theta)
\end{split} \label{cons:stage3_q_balance} \\
&p_{b,t}^{\mathrm{net}} = p_{b,t}^{\mathrm{dis}} - p_{b,t}^{\mathrm{ch}} \label{cons:bess_net_power} \\
\begin{split}
&\mathrm{soc}_{b,t} = \mathrm{soc}_{b,t-1} + \frac{\eta_b^{\mathrm{ch}} p_{b,t}^{\mathrm{ch}} - p_{b,t}^{\mathrm{dis}}/\eta_b^{\mathrm{dis}}}{E_b^{\max}} \Delta t
\end{split} \label{cons:soc_dynamics} \\
&0 \leq p_{b,t}^{\mathrm{ch}} \leq \beta_{b,t}^{\mathrm{ch}} p_b^{\mathrm{ch,max}}, \, 0 \leq p_{b,t}^{\mathrm{dis}} \leq \beta_{b,t}^{\mathrm{dis}} p_b^{\mathrm{dis,max}} \label{cons:bess_power_limits} \\
&q_b^{\mathrm{bess,min}} \leq q_{b,t}^{\mathrm{bess}} \leq q_b^{\mathrm{bess,max}}, \, \mathrm{soc}_b^{\min} \leq \mathrm{soc}_{b,t} \leq \mathrm{soc}_b^{\max} \label{cons:bess_limits} \\
&p_{\mathrm{slack}}^{\min} \leq p^g_{\mathrm{slack},t} \leq p_{\mathrm{slack}}^{\max}, \, q_{\mathrm{slack}}^{\min} \leq q^g_{\mathrm{slack},t} \leq q_{\mathrm{slack}}^{\max} \label{cons:stage3_slack_bounds} \\
&\eqref{cons:attack_line_violations}\text{--}\eqref{cons:attack_voltage_violations}, \, \beta_{b,t}^{\mathrm{ch}} + \beta_{b,t}^{\mathrm{dis}} \leq 1 \label{cons:stage3_violations} \\
&\forall i \in \mathcal{N}, b \in \mathcal{B}, (i,j) \in \mathcal{L}, t \in \mathcal{T} \nonumber
\end{align}
\end{subequations}
The feasible set $\mathcal{Z}$ is defined by constraints \eqref{cons:stage3_p_balance}--\eqref{cons:stage3_violations}, where the objective function \eqref{obj:stage3_cost} minimizes total system operational costs through three primary components: slack generation costs, BESS operational costs proportional to net power injections, and constraint violation penalties weighted by parameters $\xi_1, \xi_2 \in \mathbb{R}_+$ for transmission line and voltage violations, respectively. Unlike Stage 2, which maximizes violations, Stage 3 minimizes total system costs, including violation penalties during defense operations.

The formulation incorporates fixed attack effects $\mathbf{y}^*$ from Stage 2 through power flow constraints \eqref{cons:stage3_p_balance}--\eqref{cons:stage3_q_balance}, while BESS power injections are integrated via the connectivity matrix. Constraints \eqref{cons:bess_net_power}--\eqref{cons:soc_dynamics} define BESS operational dynamics, including energy balance with charging/discharging efficiencies, whereas constraints \eqref{cons:bess_power_limits}--\eqref{cons:bess_limits} enforce operational limits and mutually exclusive charging/discharging modes. Finally, constraint \eqref{cons:stage3_violations} references violation measures from Stage 2 and enforces slack generator limits. This formulation yields optimal BESS control strategy $\mathbf{z}^* = \{p_{b,t}^{\mathrm{ch}*}, p_{b,t}^{\mathrm{dis}*}, q_{b,t}^{\mathrm{bess}*}\}$ for real-time emergency response.

\section{Methodology} \label{sec:methodology}
Emergency grid control requires BESS coordination on a sub-second timescale. While the Stage 3 model addresses this need, its mixed-integer complexity prohibits traditional solvers that require seconds to minutes, during which cascading failures can cause complete system collapse. This timing mismatch makes conventional optimization operationally unviable for preventing fast-propagating grid emergencies.

To address this challenge, we propose an RCRL approach   that learns optimal BESS coordination policies offline, enabling real-time response during attacks. The power system control problem is modeled as a Constrained Markov Decision Process (CMDP) where the state space $\mathcal{S}$ captures post-attack system conditions, the action space $\mathcal{A}$ represents BESS control decisions, and the reward function $\mathcal{R}$ reflects the Stage 3 cost minimization objective from \eqref{obj:stage3_cost}. The key innovation integrates power system constraints \eqref{cons:stage3_p_balance}--\eqref{cons:stage3_violations} directly into policy learning, ensuring feasible emergency response within milliseconds across diverse scenarios.

\begin{figure*}[!t]
\centering
\includegraphics[width=0.94\textwidth]{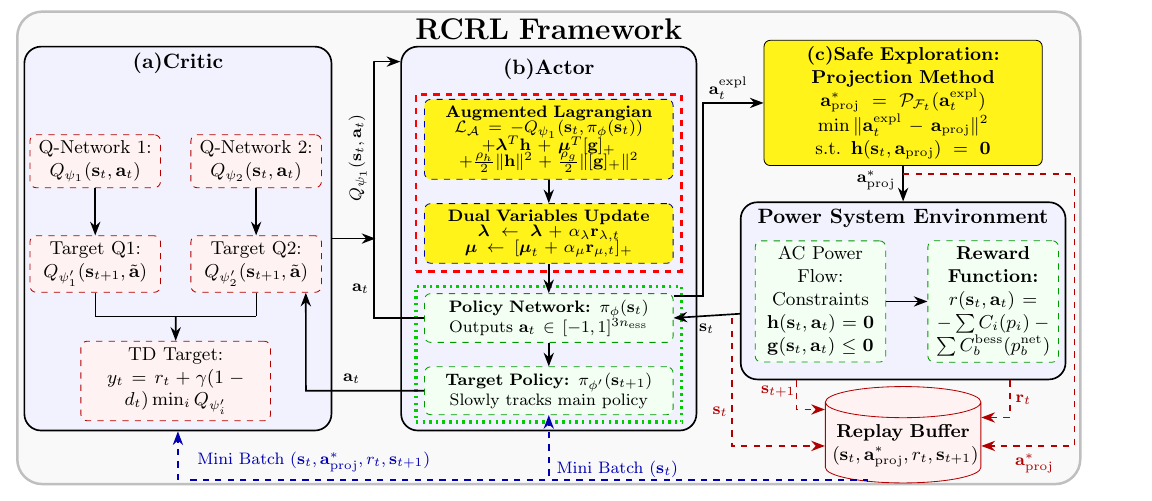}
\caption{RCRL architecture: (a) dual critic networks, (b)  actor with constraint handling, (c) projection-based safe exploration.}
\label{fig:constrained_td3_framework}
\vspace{-0.6cm}
\end{figure*}
\subsection{Actor-Critic Architecture for BESS Control} \label{sec:actor-critic}
The RCRL framework extends the Twin Delayed Deep Deterministic Policy Gradient (TD3) algorithm~\cite{fujimoto2018addressing} by incorporating constraint handling and robustness mechanisms for safe BESS control under cyber-physical attacks, as illustrated in Fig.~\ref{fig:constrained_td3_framework}. The enhanced architecture maintains TD3's dual critic networks while adding augmented Lagrangian optimization for constraint handling and projection-based safe exploration to ensure operational feasibility throughout the learning process.
%
\subsubsection{State and Action Space Design}
The state space captures essential post-attack system conditions for BESS control:
\begin{equation} \label{eq:state_space_rl}
\mathbf{s}_t := [\mathbf{V}_t^T, \boldsymbol{\theta}_t^T, (\mathbf{P}_t^{\text{inj}})^T, \mathbf{soc}_t^T]^T \in \mathbb{R}^{n_s}
\end{equation}
where $\mathbf{V}_t \in \mathbb{R}^N$ and $\boldsymbol{\theta}_t \in \mathbb{R}^N$ represent voltage magnitudes and angles at all buses, $\mathbf{P}_t^{\text{inj}} \in \mathbb{R}^N$ captures net power injections, and $\mathbf{soc}_t \in \mathbb{R}^B$ tracks the state-of-charge for all BESS units. The state dimension is $n_s := 3N + B$.

The action space controls BESS operations with normalized values for stable learning as shown in equation \eqref{eq:action_space_rl}.
\begin{equation} \label{eq:action_space_rl}
\mathbf{a}_t := [(\mathbf{a}_t^{\text{ch}})^T, (\mathbf{a}_t^{\text{dis}})^T, (\mathbf{a}_t^{Q})^T]^T \in [-1,1]^{n_a}
\end{equation}
where $\mathbf{a}_t^{\text{ch}}, \mathbf{a}_t^{\text{dis}}, \mathbf{a}_t^{Q} \in \mathbb{R}^B$ represent normalized charging, discharging, and reactive power controls, with $n_a = 3B$ total action dimension.

Actions are mapped by model \eqref{mod:action_mapping_rl} to physical values through standard affine transformations. 
The $(\mathbf{a}_t + 1)/2$ transformation maps the normalized RL action space $[-1,1]$ to physically valid BESS operating ranges. For power actions, $-1$ corresponds to zero power (battery off) and $+1$ to maximum power, ensuring non-negative charging/discharging powers while maintaining stable RL training with standardized action bounds \eqref{cons:bess_power_limits}. For reactive power, the transformation scales the full allowable range from minimum to maximum values \eqref{cons:bess_limits}.
\begin{subequations} \label{mod:action_mapping_rl}
\begin{align}
p_{b,t}^{\mathrm{ch}} &:= \frac{p_b^{\mathrm{ch,max}}}{2}(\mathbf{a}_t^{\text{ch}}[b] + 1), \quad p_{b,t}^{\mathrm{dis}} := \frac{p_b^{\mathrm{dis,max}}}{2}(\mathbf{a}_t^{\text{dis}}[b] + 1) \label{eq:P_mapping_rl} \\
q_{b,t}^{\mathrm{bess}} &:= q_b^{\mathrm{bess,min}} + \frac{q_b^{\mathrm{bess,max}} - q_b^{\mathrm{bess,min}}}{2}(\mathbf{a}_t^{Q}[b] + 1) \label{eq:Q_mapping_rl}
\end{align}
\end{subequations}

\subsubsection{Reward Function and System Integration}
The reward function serves as the critical bridge between the RL framework and the Stage 3 power system optimization objective from \eqref{obj:stage3_cost}, ensuring the learned policy respects power system constraints while optimizing operational economics. The reward function is defined as the negative total operational cost:
\begin{equation} \label{eq:reward_function_rl}
r(\mathbf{s}_t,\mathbf{a}_t) := -\left(\sum_{b \in \mathcal{B}} C_b p_{b,t}^{\mathrm{net}} + C^{\text{slk}}(p^g_{\mathrm{slack},t}) + C^{\text{viol}}(\mathbf{s}_t,\mathbf{a}_t)\right)
\end{equation}
where the reward function comprises three cost components that directly align with the Stage 3 objective. The first term $\sum_{b \in \mathcal{B}} C_b p_{b,t}^{\mathrm{net}}$ represents BESS operational costs, where $p_{b,t}^{\mathrm{net}}$ denotes the net power injection from battery unit $b$ at time $t$. The second term is defined in \eqref{slack_cost} that captures slack generation costs for balancing power demand. The third term $C^{\text{viol}}(\mathbf{s}_t,\mathbf{a}_t) = \xi_1\sup_{k=1,\dots,4} \Phi^k_{i,j,t} + \xi_2 \sup_{k=5,6} \Phi^k_{i,t} $ penalizes constraint violations to encourage feasible operation. All power variables are computed from AC power flow solutions that incorporate BESS control actions through the mappings in \eqref{mod:action_mapping_rl}.

\subsubsection{Actor Network Architecture}
The actor network implements a deterministic policy $\pi_\phi: \mathcal{S} \rightarrow \mathcal{A}$ that maps power system states to BESS control actions:
\begin{equation} \label{eq:actor_policy_rl}
\pi_\phi(\mathbf{s}_t) = \tanh(f_\phi(\mathbf{s}_t)) \in [-1,1]^{3B}
\end{equation}
where $f_\phi(\mathbf{s}_t)$ represents a three-layer feedforward network with ReLU activations and the $\tanh$ function ensures normalized action bounds.

\subsubsection{Critic Network Architecture}
The TD3 algorithm employs two independent critic networks ($i=1,2$) to address overestimation bias. Each critic network $Q_{\psi_i}: \mathcal{S} \times \mathcal{A} \rightarrow \mathbb{R}$ estimates action-values:
\begin{equation} \label{eq:critic_networks_rl}
Q_{\psi_i}(\mathbf{s}_t, \mathbf{a}_t) = h_{\psi_i}([\mathbf{s}_t; \mathbf{a}_t])
\end{equation}
where $[\mathbf{s}_t; \mathbf{a}_t]$ denotes concatenated state-action inputs and $h_{\psi_i}$ represents the $i$-th critic function using a standard three-layer feedforward architecture with ReLU activations.

\subsubsection{Target Networks and Training Dynamics}

TD3 employs target networks $\pi_{\phi'}$ and $Q_{\psi_i'}$ with slowly evolving parameters to ensure training stability. As defined in equation \eqref{eq:td_target_rl}, the target Q-value computation incorporates clipped double Q-learning with target policy smoothing.
\begin{equation} \label{eq:td_target_rl}
y_t = r_t + \gamma(1-d_t)\min\{Q_{\psi_1'}(\mathbf{s}_{t+1}, \tilde{\mathbf{a}}), Q_{\psi_2'}(\mathbf{s}_{t+1}, \tilde{\mathbf{a}})\}
\end{equation}
where $\tilde{\mathbf{a}} = \text{clip}(\pi_{\phi'}(\mathbf{s}_{t+1}) + \boldsymbol{\epsilon}, -1, 1)$ with clipped Gaussian noise $\boldsymbol{\epsilon}$, discount factor $\gamma \in (0,1)$, and terminal flag $d_t \in \{0,1\}$.
The critic networks minimize the Bellman error $\mathcal{L}_{\text{critic}}(\psi_i) = \mathbb{E}[(y_t - Q_{\psi_i}(\mathbf{s},\mathbf{a}))^2]$, while the actor maximizes the first critic's Q-value: $\mathcal{L}_{\text{actor}}(\phi) = -\mathbb{E}[Q_{\psi_1}(\mathbf{s}, \pi_\phi(\mathbf{s}))]$. Target networks are updated via Polyak averaging with rate $\tau = 0.005$.

\subsection{Safe Exploration: Beta-Blending Projection Method} \label{sec:safe-exploration}

In safety-critical power systems, unconstrained RL exploration can cause constraint violations leading to voltage instability or cascading failures. We propose a Beta-Blending projection approach that combines safe projection with gradual exploration to maintain both safety and learning efficiency. Unlike \cite{wu2023constrained}, which only handles feasibility during testing, our approach ensures safety throughout the entire training process. When infeasible actions are generated during training, our method ensures constraint satisfaction and continues learning, whereas approaches that lack training-phase safety mechanisms may encounter constraint violations that could halt the training process.


\subsubsection{Feasible Set and Projection Formulation}
For system state $\mathbf{s}_t$, the feasible action set $\mathcal{F}_t \subset [-1,1]^{3B}$ is defined as:
\begin{equation} \label{eq:feasible_set_definition}
\mathcal{F}_t := \{\mathbf{a} \in [-1,1]^{3B} : \mathbf{h}(\mathbf{s}_t, \mathbf{a}) = \mathbf{0}, \mathbf{g}(\mathbf{s}_t, \mathbf{a}) \leq \mathbf{0}\}
\end{equation}
where $\mathbf{h}(\mathbf{s}_t, \mathbf{a}): \mathbb{R}^{n_s} \times \mathbb{R}^{3B} \rightarrow \mathbb{R}^{m_e}$ represents equality constraints (power balance and SOC dynamics) and $\mathbf{g}(\mathbf{s}_t, \mathbf{a}): \mathbb{R}^{n_s} \times \mathbb{R}^{3B} \rightarrow \mathbb{R}^{m_i}$ represents inequality constraints (voltage bounds, line limits, and BESS operational constraints).

The projection operator $\mathcal{P}_{\mathcal{F}_t}: \mathbb{R}^{3B} \rightarrow \mathcal{F}_t$ finds the closest feasible action by solving:
\begin{equation}\label{eq:projection_formulation}
\mathcal{P}_{\mathcal{F}_t}(\mathbf{a}_t^{\text{expl}}) := \arg\min_{\mathbf{a}^{\text{proj}} \in \mathcal{F}_t} \|\mathbf{a}_t^{\text{expl}} - \mathbf{a}^{\text{proj}}\|^2
\end{equation}

\subsubsection{Beta-Blending Innovation}
Our approach introduces a time-varying blending parameter $\beta_t \in [0,1]$ that gradually transitions from safe to optimal exploration with schedule $\beta_t := \min(t/T_{\beta}, 1)$:
\begin{equation} \label{eq:beta_blending}
\mathbf{a}_t^{\text{final}} := \beta_t \mathbf{a}_t^{\text{expl}} + (1-\beta_t)\mathcal{P}_{\mathcal{F}_t}(\mathbf{a}_t^{\text{expl}})
\end{equation}

\begin{remark}[Beta-Blending: Resolving the Safety-Learning Paradox]
Traditional approaches face a fundamental paradox: relying solely on projection during training prevents neural networks from learning constraint satisfaction since the projection operator corrects violations externally, creating projection dependence. Conversely, using only primal-dual methods without projection violates constraints during early training and may destabilize learning. Beta-blending resolves this paradox through a novel graduated handoff mechanism: initially ($\beta_t \approx 0$), projection ensures safety while primal-dual gradients train network parameters to internalize constraints. As $\beta_t$ increases, the approach gradually shifts from external correction to learned constraint awareness. By late training ($\beta_t \approx 1$), the network generates inherently safe actions without projection, eliminating computational overhead while maintaining constraint satisfaction—critical for millisecond emergency response.
\end{remark}

\subsection{Safe Exploitation: Primal-Dual Method} \label{sec:safe-exploitation}
The Beta-Blending approach ensures safety during training, but real-world power system deployment demands sub-millisecond decision-making that renders online constraint projection computationally prohibitive. To bridge this gap, we develop a primal-dual augmented Lagrangian framework that embeds constraint enforcement within the Beta-Blending TD3 framework, enabling the neural network to internalize feasibility and eliminate the need for runtime projection during deployment.

\subsubsection{Constrained Policy Optimization Framework}
We reformulate Stage 3 from \eqref{mod:stage3_bess} as a constrained reinforcement learning problem to enable real-time BESS coordination under adversarial conditions, as presented in \eqref{eq:constrained_policy_optimization}.
\begin{subequations}\label{eq:constrained_policy_optimization}
\begin{align}
\max_{\pi_{\phi}} \quad &J(\phi) = \mathbb{E}_{\tau \sim \pi_{\phi}} \left[ \sum_{t=0}^{T-1} \gamma^t r(\mathbf{s}_t, \mathbf{a}_t) \right] \label{eq:policy_objective}\\
\text{s.t.} \quad & \mathbb{P}\left[\mathbf{h}(\mathbf{s}_t, \mathbf{a}_t)  = \mathbf{0} \right] = 1 - \delta_h, \quad \forall t \in \{0, \dots, T-1\} \label{eq:equality_constraints_policy}\\
& \mathbb{P}\left[\mathbf{g}(\mathbf{s}_t, \mathbf{a}_t) \leq \mathbf{0}\right]  = 1 - \delta_g, \quad \forall t \in \{0, \dots, T-1\} \label{eq:inequality_constraints_policy}
\end{align}
\end{subequations}
where $\tau := \{(\mathbf{s}_0, \mathbf{a}_0, r_0), (\mathbf{s}_1, \mathbf{a}_1, r_1), \ldots, (\mathbf{s}_{T-1}, \mathbf{a}_{T-1}, r_{T-1})\}$ denotes a state-action-reward trajectory under policy $\pi_{\phi}$, and $\lim_{T\to \infty}\delta_h, \delta_g \to 0$ represent constraint violation tolerances that diminish as training progresses. The stochasticity arises from exploration noise $\boldsymbol{\epsilon}_t$ and diverse attack scenarios generated by the tri-level framework. Constraints $\mathbf{h}(\mathbf{s}_t, \mathbf{a}_t)$ and $\mathbf{g}(\mathbf{s}_t, \mathbf{a}_t)$  represent power flow equations, generation limits, voltage bounds, and thermal constraints. The probabilistic satisfaction requirements arise from stochastic policy $\pi_{\phi}$, ensuring constraints are satisfied almost surely as training converges. This reformulation transforms the computationally prohibitive mixed-integer optimization from Stage 3 into a tractable policy learning framework, enabling millisecond emergency response while maintaining operational constraints.


\subsubsection{Augmented Lagrangian Formulation}
To solve the constrained policy optimization problem in \eqref{eq:constrained_policy_optimization}, we transform it into an unconstrained problem using augmented Lagrangian techniques. Since the original objective $J(\phi)$ in \eqref{eq:policy_objective} represents the expected cumulative reward, we reformulate this at the policy gradient level by replacing the instantaneous reward maximization with Q-function maximization for each time step. This allows us to incorporate constraint penalties directly into the temporal difference learning framework.
The augmented Lagrangian for a single time step is formulated as:
\begin{equation} \label{eq:augmented_lagrangian_vectorized}
\begin{aligned}
\mathcal{L}_{\mathcal{A}}(\phi, \boldsymbol{\lambda}_t, \boldsymbol{\mu}_t) &= -Q_{\psi}(\mathbf{s}_t, \pi_{\phi}(\mathbf{s}_t)) + \boldsymbol{\lambda}_t^T \mathbf{h}(\mathbf{s}_t, \pi_{\phi}(\mathbf{s}_t)) \\
&+ \boldsymbol{\mu}_t^T [\mathbf{g}(\mathbf{s}_t, \pi_{\phi}(\mathbf{s}_t))]_+ \\
&+ \frac{\rho}{2}\left(\|\mathbf{h}(\mathbf{s}_t, \pi_{\phi}(\mathbf{s}_t))\|^2 + \|[\mathbf{g}(\mathbf{s}_t, \pi_{\phi}(\mathbf{s}_t))]_+\|^2\right)
\end{aligned}
\end{equation}
where $Q_{\psi}(\mathbf{s}_t, \mathbf{a}_t)$ is the critic network defined in \eqref{eq:actor_policy_rl}, $\boldsymbol{\lambda}_t \in \mathbb{R}^{m_e}$ and $\boldsymbol{\mu}_t \in \mathbb{R}_+^{m_i}$ are Lagrange multipliers for equality constraints $\mathbf{h}(\mathbf{s}_t, \mathbf{a}_t)$ from \eqref{eq:equality_constraints_policy} and inequality constraints $\mathbf{g}(\mathbf{s}_t, \mathbf{a}_t)$ from \eqref{eq:inequality_constraints_policy}, respectively, $[\cdot]_+ = \max(0, \cdot)$ denotes the positive part operator, and $\rho > 0$ is the penalty parameter that controls constraint violation penalties.

The dual variables are updated using standard augmented Lagrangian procedures:
\begin{subequations} \label{eq:dual_updates}
\begin{align}
\boldsymbol{\lambda}_{t+1} &= \boldsymbol{\lambda}_t + \alpha_{\lambda} \mathbf{r}_{\lambda,t} \label{eq:lambda_update}\\
\boldsymbol{\mu}_{t+1} &= [\boldsymbol{\mu}_t + \alpha_{\mu} \mathbf{r}_{\mu,t}]_+ \label{eq:mu_update} 
\end{align}
\end{subequations}
where $\alpha_{\lambda}, \alpha_{\mu} > 0$ are dual learning rates, $\mathbf{r}_{\lambda,t} := \mathbf{h}(\mathbf{s}_t, \pi_{\phi}(\mathbf{s}_t))$ represents the equality constraint residuals, and $\mathbf{r}_{\mu,t} := [\mathbf{g}(\mathbf{s}_t, \pi_{\phi}(\mathbf{s}_t))]_+$ represents the inequality constraint residuals.

The primal update optimizes the actor policy parameters by minimizing the augmented Lagrangian:
\begin{equation} \label{eq:primal_update}
\phi \leftarrow \phi - \eta_{\phi} \nabla_{\phi} \mathcal{L}_{\mathcal{A}}(\phi, \boldsymbol{\lambda}_t, \boldsymbol{\mu}_t)
\end{equation}
where $\eta_{\phi} > 0$ is the primal learning rate.
This framework enables real-time constraint satisfaction by embedding all Stage 3 physics-based constraints directly into policy learning, eliminating the need for online optimization during emergency response scenarios.


\subsubsection{Primal-Dual Update Algorithm}\label{sec:primal-dual_update}
The primal-dual algorithm alternates between policy parameter updates and dual variable updates. The policy parameters evolve according to gradient descent on the augmented Lagrangian with clipping bounds $\Lambda_{\max}, M_{\max} > 0$ to maintain stability as specified in \eqref{eq:practical_dual_updates}.
The clipping bounds are chosen to prevent dual variables from growing excessively during training while preserving effective constraint enforcement \cite{fujimoto2018addressing}.
\begin{subequations}
\label{eq:practical_dual_updates}
\begin{align}
\boldsymbol{\lambda}_t^{(k+1)} &= \text{clip}\left(\boldsymbol{\lambda}_t^{(k)} + \alpha_{\lambda} \mathbf{r}_{\lambda,t}^{(k)}, -\Lambda_{\max}, \Lambda_{\max}\right) \label{eq:lambda_update_clip} \\
\boldsymbol{\mu}_t^{(k+1)} &= \text{clip}\left([\boldsymbol{\mu}_t^{(k)} + \alpha_{\mu} \mathbf{r}_{\mu,t}^{(k)}]_+, 0, M_{\max}\right) \label{eq:mu_update_clip}
\end{align}
\end{subequations}
where $k$ denotes the iteration index, and dual variables are updated every $\kappa \in \mathbb{Z}_+$ primal iterations to maintain computational efficiency while ensuring constraint enforcement. This algorithm achieves asymptotic constraint satisfaction under standard regularity conditions.

\subsection{Theoretical Convergence Analysis}
Parallel to our previous work [8], we develop theorems below for the tri-level RCRL framework. The main differences are: (i) \textbf{Multi-stage attack-dependent constraints}: Unlike [8] which addresses standard OPF constraints, our theorems handle cascaded constraints where Stage 3 feasibility depends on attack scenarios from Stages 1-2; (ii) \textbf{Beta-blending convergence analysis}: The convergence analysis incorporates the time-varying $\beta$ parameter that transitions from projection-based to direct policy actions; (iii) \textbf{Attack-resilient constraint structure}: We establish Linear Independence Constraint Qualification  (LICQ) for power systems under coordinated \textit{N-K}attacks rather than stochastic demand uncertainty.

The key theoretical advancement is demonstrating that Beta-blending projection operations do not compromise asymptotic convergence while enabling safe exploration during training.

\begin{theorem}[Convergence to Stationary Points]
\label{thm:primal_dual_convergence}
Let $\{(\phi^{(k)}, \boldsymbol{\lambda}_t^{(k)}, \boldsymbol{\mu}_t^{(k)})\}$ denote the sequence generated by \eqref{eq:primal_update} and \eqref{eq:practical_dual_updates}. Under the assumptions that (i) $\pi_{\phi}$ is twice continuously differentiable with Lipschitz continuous gradients, (ii) constraint functions $h_j(\mathbf{s}_t, \mathbf{a}_t)$ and $g_i(\mathbf{s}_t, \mathbf{a}_t)$ satisfy the Linear Independence Constraint Qualification, (iii) policy parameter space $\Phi$ is compact, and (iv) dual variables remain bounded through projection operations, every accumulation point $(\phi^*, \boldsymbol{\lambda}_t^*, \boldsymbol{\mu}_t^*)$ satisfies the first-order stationarity condition in \eqref{eq:stationarity_condition}.
\begin{equation}
\label{eq:stationarity_condition}
\nabla_{\phi} \mathcal{L}_{\mathcal{A}}(\phi^*, \boldsymbol{\lambda}_t^*, \boldsymbol{\mu}_t^*) \in \mathcal{N}_{\Phi}(\phi^*)
\end{equation}
where $\mathcal{N}_{\Phi}(\phi^*)$ denotes the normal cone to $\Phi$ at $\phi^*$.
\end{theorem}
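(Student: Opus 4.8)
The plan is to run the classical convergence argument for projected augmented-Lagrangian methods (in the spirit of Bertsekas' analysis), adapted to the primal--dual recursion \eqref{eq:primal_update}--\eqref{eq:practical_dual_updates}, and to show that the Beta-blending projection \eqref{eq:beta_blending} is asymptotically inactive so that it cannot affect the limit. First I would record the structural facts that drive everything: by assumption (iv) the dual iterates lie in the compact box $\mathcal{D} := [-\Lambda_{\max},\Lambda_{\max}]^{m_e}\times[0,M_{\max}]^{m_i}$, and by (iii) $\phi^{(k)}\in\Phi$ with $\Phi$ compact, so $\{(\phi^{(k)},\boldsymbol{\lambda}_t^{(k)},\boldsymbol{\mu}_t^{(k)})\}$ is bounded and therefore has accumulation points. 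Next, using (i) together with smoothness of $Q_{\psi}$, $\mathbf{h}$ and $\mathbf{g}$ in the action argument and the $C^1$ regularity of $t\mapsto\tfrac12([t]_+)^2$, I would show that $\phi\mapsto\mathcal{L}_{\mathcal{A}}(\phi,\boldsymbol{\lambda}_t,\boldsymbol{\mu}_t)$ has a gradient that is $L$-Lipschitz with $L$ uniform over $\Phi\times\mathcal{D}$ (compactness is essential here: it upgrades the local Lipschitz estimates to a global constant). The terms $\boldsymbol{\mu}_t^{T}[\mathbf{g}]_+$ and $\|[\mathbf{g}]_+\|^2$ are only piecewise smooth, but their gradients remain Lipschitz, which is all we need.

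The second block is a descent/telescoping argument. Interpreting \eqref{eq:primal_update} as a projected-gradient step onto $\Phi$ and choosing $\eta_{\phi}\in(0,1/L)$, the descent lemma gives
\[
\mathcal{L}_{\mathcal{A}}(\phi^{(k+1)},\boldsymbol{\lambda}_t^{(k)},\boldsymbol{\mu}_t^{(k)}) \le \mathcal{L}_{\mathcal{A}}(\phi^{(k)},\boldsymbol{\lambda}_t^{(k)},\boldsymbol{\mu}_t^{(k)}) - c\,\|G^{(k)}\|^2,
\]
where $G^{(k)}$ is the gradient-mapping residual at $\phi^{(k)}$ and $c=c(\eta_{\phi},L)>0$. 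I would then bound the change in the Lyapunov value caused by a dual refresh: $\mathcal{L}_{\mathcal{A}}$ is affine in $\boldsymbol{\lambda}_t$ and piecewise-quadratic in $\boldsymbol{\mu}_t$ with coefficients (the residuals $\mathbf{r}_{\lambda,t},\mathbf{r}_{\mu,t}$) bounded on $\Phi\times\mathcal{D}$, so a step of \eqref{eq:practical_dual_updates} changes the value by at most $C(\alpha_{\lambda}+\alpha_{\mu})\|(\mathbf{r}_{\lambda,t},\mathbf{r}_{\mu,t})\|$; since duals are refreshed only every $\kappa$ primal iterations and the penalty term $\tfrac{\rho}{2}(\|\mathbf{h}\|^2+\|[\mathbf{g}]_+\|^2)$ drives these residuals toward zero along the primal trajectory, the perturbations are summable (equivalently, one runs the standard two-time-scale argument with the dual stepsizes on the slow scale). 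Telescoping and using that $\mathcal{L}_{\mathcal{A}}$ is bounded below on the compact set then yields $\sum_k\|G^{(k)}\|^2<\infty$, hence $G^{(k)}\to 0$.

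Finally I would pass to the limit. Along a convergent subsequence $(\phi^{(k_j)},\boldsymbol{\lambda}_t^{(k_j)},\boldsymbol{\mu}_t^{(k_j)})\to(\phi^*,\boldsymbol{\lambda}_t^*,\boldsymbol{\mu}_t^*)$, continuity of the gradient mapping on $\Phi\times\mathcal{D}$ and $G^{(k_j)}\to 0$ give $G(\phi^*,\boldsymbol{\lambda}_t^*,\boldsymbol{\mu}_t^*)=0$, which by the variational characterization of the Euclidean projection onto the convex compact set $\Phi$ is exactly the first-order stationarity condition $\nabla_{\phi}\mathcal{L}_{\mathcal{A}}(\phi^*,\boldsymbol{\lambda}_t^*,\boldsymbol{\mu}_t^*)\in\mathcal{N}_{\Phi}(\phi^*)$ of \eqref{eq:stationarity_condition}. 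Two points then need care. First, the Beta-blending map: for training steps beyond $T_{\beta}$ one has $\beta_t=1$, so $\mathbf{a}_t^{\text{final}}=\mathbf{a}_t^{\text{expl}}$ and $\mathcal{P}_{\mathcal{F}_t}$ is never invoked; since accumulation points depend only on the tail of the iteration, the projection cannot influence the limit, which is precisely the claim that Beta-blending does not compromise asymptotic convergence. Second, the role of LICQ (assumption (ii)): under a coordinated $N$--$K$ attack one must verify that the active Stage-3 constraint gradients remain linearly independent at every feasible $(\mathbf{s}_t,\mathbf{a}_t)$ (this is where non-singularity of the AC power-flow Jacobian enters), which guarantees that the limiting multipliers $(\boldsymbol{\lambda}_t^*,\boldsymbol{\mu}_t^*)$ are the unique KKT multipliers and that the stationary point of $\mathcal{L}_{\mathcal{A}}$ is a genuine KKT point of \eqref{eq:constrained_policy_optimization} rather than a penalty artifact.

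I expect the main obstacle to be controlling the interaction between the primal descent and the non-diminishing dual updates: making the Lyapunov-perturbation bound of the second block rigorous requires either an explicit two-time-scale separation (dual stepsizes asymptotically dominated by the primal contraction rate $c$) or a careful argument that the augmented penalty forces $\mathbf{r}_{\lambda,t},\mathbf{r}_{\mu,t}$ to zero fast enough to be summable. Establishing the uniform Lipschitz constant $L$ through the implicitly defined AC power-flow solution is the secondary difficulty, and it is exactly what assumption (ii) is invoked to rule out.
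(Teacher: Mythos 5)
Your proposal follows essentially the same route as the paper's own proof: compactness of $\Phi$ plus the clipped dual boxes gives accumulation points via Bolzano--Weierstrass, assumption (i) with bounded duals yields a uniform Lipschitz gradient, the descent lemma with $\eta_{\phi}\le \tfrac{1}{2L}$ gives summability of squared gradient norms and hence vanishing gradients, continuity passes this to the limit point, and Beta-blending is dismissed because $\beta_t=1$ for $t\ge T_{\beta}$ so the projection never touches the tail of the iteration. The only substantive difference is in your favor: you explicitly flag and attempt to control the perturbation of the Lyapunov value caused by the non-vanishing dual refreshes (via a two-time-scale or summability argument) and you state the limit in terms of the projected-gradient mapping onto $\Phi$, whereas the paper's Steps 4--5 telescope the descent inequality as if the dual variables were held fixed and applies an unprojected gradient step while still concluding with the normal-cone condition, so the gap you identify as the main obstacle is present, unacknowledged, in the paper's own argument.
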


\begin{corollary}[Beta-blending Projection Impact]
\label{cor:beta_blending_convergence}
Under the Beta-blending framework with schedule $\beta_t = \min(t/T_{\beta}, 1)$, projection operations do not affect asymptotic convergence properties. For $t \geq T_{\beta}$ where $\beta_t = 1$, the stationarity condition in \eqref{eq:stationarity_condition} reduces to the standard augmented Lagrangian case since projection effects vanish.
\end{corollary}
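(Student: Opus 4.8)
The plan is to exploit the finite-time structure of the blending schedule $\beta_t = \min(t/T_{\beta},1)$ together with the observation that the projection operator $\mathcal{P}_{\mathcal{F}_t}$ enters only the \emph{behavior} policy (the action actually executed in the environment via \eqref{eq:beta_blending}) and never the augmented Lagrangian \eqref{eq:augmented_lagrangian_vectorized}, the primal update \eqref{eq:primal_update}, or the dual updates \eqref{eq:practical_dual_updates}, all of which are expressed in terms of the raw policy output $\pi_{\phi}(\mathbf{s}_t)$. Thus the corollary is, at its core, a statement that a finite perturbation of the data-generating process leaves the asymptotic fixed-point structure of Theorem~\ref{thm:primal_dual_convergence} intact.

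First I would make the finite-horizon reduction explicit: for every $t \ge T_{\beta}$, the schedule gives $\beta_t = 1$, so $\mathbf{a}_t^{\text{final}} = \mathbf{a}_t^{\text{expl}} = \pi_{\phi}(\mathbf{s}_t) + \boldsymbol{\epsilon}_t$, the term $(1-\beta_t)\mathcal{P}_{\mathcal{F}_t}(\mathbf{a}_t^{\text{expl}})$ vanishes identically, and the projection subproblem \eqref{eq:projection_formulation} is never invoked beyond the finite horizon $T_{\beta}$. Consequently the only channel through which projection can influence the learning iterates is the finitely many transitions collected while $t < T_{\beta}$ that populate the replay buffer and steer the early state visitation.

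Second I would argue this transient is asymptotically immaterial. The coupled recursions \eqref{eq:primal_update}--\eqref{eq:practical_dual_updates} (with dual updates on the slower $\kappa$-step timescale) constitute a two-time-scale stochastic approximation scheme whose accumulation points, under hypotheses (i)--(iv) of Theorem~\ref{thm:primal_dual_convergence}, are exactly the first-order stationary points of the \emph{unprojected} augmented Lagrangian. A finite contamination of the sample stream changes only the initial condition of this scheme: by the coverage/ergodicity condition implicit in the off-policy analysis, the buffer-averaged gradient estimates converge to the same population quantities regardless of whether the first $T_{\beta}$ samples came from the blended or the raw policy. Hence the set of accumulation points is unchanged, and Theorem~\ref{thm:primal_dual_convergence} transfers verbatim to the Beta-blending iterates; substituting $\beta_t = 1$ into \eqref{eq:stationarity_condition} then yields precisely the standard augmented Lagrangian condition with the same normal-cone right-hand side $\mathcal{N}_{\Phi}(\phi^*)$, since $\nabla_{\phi}\mathcal{L}_{\mathcal{A}}$ has no dependence on $\mathcal{P}_{\mathcal{F}_t}$.

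The main obstacle I anticipate is making rigorous the claim that the replay-buffer contamination during $t < T_{\beta}$ does not shift the limiting fixed points. For a finite buffer this follows from a straightforward "forgetting" argument once the buffer has cycled past its capacity; for an unbounded buffer it requires a Cesaro-type estimate showing that the relative weight of the first $T_{\beta}$ samples in the gradient average decays to zero, combined with a condition ensuring the induced state-visitation measure converges to a stationary distribution independent of the transient behavior policy. I would invoke the same regularity and ergodicity assumptions already underpinning Theorem~\ref{thm:primal_dual_convergence} to discharge this, keeping the corollary's proof short.
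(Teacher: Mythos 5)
Your proposal is correct and follows the same basic decomposition as the paper: split at the finite horizon $T_{\beta}$, note that for $t \geq T_{\beta}$ the blend \eqref{eq:beta_blending} collapses to $\mathbf{a}_t^{\text{final}} = \mathbf{a}_t^{\text{expl}}$ so projection effects vanish, and conclude that the stationarity condition \eqref{eq:stationarity_condition} of Theorem~\ref{thm:primal_dual_convergence} is recovered. Where you genuinely go beyond the paper is in the treatment of the transient phase. The paper's proof disposes of $t < T_{\beta}$ in two sentences: it observes that projected actions keep iterates in the constraint-admissible region and that the continuity of $\beta_t$ prevents discontinuous jumps in the policy gradient, and it leaves the question of whether the early, projection-contaminated data can shift the limit set entirely implicit. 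You instead identify the precise channel through which projection can matter at all --- it never appears in $\mathcal{L}_{\mathcal{A}}$, \eqref{eq:primal_update}, or \eqref{eq:practical_dual_updates}, only in the behavior policy and hence in the replay-buffer distribution --- and you argue that this finite contamination is asymptotically forgotten via a two-time-scale stochastic-approximation / Cesaro-averaging argument. That is the more honest and more rigorous route, and it exposes the real gap in the paper's one-line assertion; the price is that your argument needs ergodicity/coverage and buffer-forgetting conditions that are not among hypotheses (i)--(iv) of Theorem~\ref{thm:primal_dual_convergence} and would have to be stated explicitly (the paper's version quietly needs something equivalent but never says so). In short: same skeleton, but your proof supplies the missing mechanism for dismissing the transient, at the cost of additional assumptions you correctly flag rather than hide.
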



%

\begin{theorem}[Constraint Violation Bounds]
\label{thm:constraint_satisfaction}
Under the conditions of Theorem~\ref{thm:primal_dual_convergence} and assuming the constrained MDP satisfies Slater's constraint qualification (there exists $(\mathbf{s}, \mathbf{a}) \in \mathcal{F}$ such that $\mathbf{g}(\mathbf{s}, \mathbf{a}) < \mathbf{0}$ strictly), for any $\epsilon > 0$, there exist penalty parameter thresholds $\rho_h^{\min}, \rho_g^{\min} > 0$ such that for penalty parameters $\rho_h \geq \rho_h^{\min}$ and $\rho_g \geq \rho_g^{\min}$ in the augmented Lagrangian \eqref{eq:augmented_lagrangian_vectorized}, every stationary point $(\phi^*, \boldsymbol{\lambda}_t^*, \boldsymbol{\mu}_t^*)$ of $\mathcal{L}_{\mathcal{A}}$ satisfies the constraint violation bounds in \eqref{eq:constraint_bounds}.
\begin{equation}
\label{eq:constraint_bounds}
\|\mathbf{r}_{\lambda,t}\| \leq \epsilon \quad \text{and} \quad \|\mathbf{r}_{\mu,t}\| \leq \epsilon
\end{equation}
The penalty parameter thresholds are computed using \eqref{eq:penalty_thresholds}.
\begin{subequations}
\label{eq:penalty_thresholds}
\begin{align}
\rho_h^{\min} &= \frac{\|\nabla_{\phi} Q_{\psi_1}(\mathbf{s}_t, \pi_{\phi^*}(\mathbf{s}_t))\| + \Lambda_{\max}}{\epsilon} \label{eq:penalty_h} \\
\rho_g^{\min} &= \frac{M_{\max}}{\epsilon} \label{eq:penalty_g}
\end{align}
\end{subequations}
where $\rho_h$ and $\rho_g$ correspond to the penalty parameter $\rho$ in \eqref{eq:augmented_lagrangian_vectorized} for equality and inequality constraints, respectively. This establishes convergence to policies satisfying power system constraints within tolerance $\epsilon$.
\end{theorem}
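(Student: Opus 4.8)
The plan is to treat Theorem~\ref{thm:constraint_satisfaction} as a classical augmented-Lagrangian perturbation bound: at a stationary point the gradient of the quadratic penalty terms is forced to balance the (bounded) gradient of the $Q$-objective together with the (bounded) multiplier terms, so the constraint residuals must shrink like $O(1/\rho)$. Concretely, I would start from the first-order stationarity condition~\eqref{eq:stationarity_condition} supplied by Theorem~\ref{thm:primal_dual_convergence}, together with the dual fixed-point conditions implied by~\eqref{eq:practical_dual_updates} (at an accumulation point the clipped dual updates must leave $\boldsymbol{\lambda}_t^{*},\boldsymbol{\mu}_t^{*}$ unchanged). Expanding $\nabla_\phi \mathcal{L}_{\mathcal{A}}$ from~\eqref{eq:augmented_lagrangian_vectorized} by the chain rule through the policy map, writing $J_\phi := \partial\pi_\phi(\mathbf{s}_t)/\partial\phi$ and $H,G$ for the action-Jacobians of $\mathbf{h},\mathbf{g}$, gives
\[
\nabla_\phi \mathcal{L}_{\mathcal{A}} = J_\phi^{\top}\!\Big[-\nabla_{\mathbf{a}}Q_{\psi} + H^{\top}\big(\boldsymbol{\lambda}_t^{*}+\rho_h\,\mathbf{r}_{\lambda,t}\big) + \tilde{G}^{\top}\big(\boldsymbol{\mu}_t^{*}+\rho_g\,\mathbf{r}_{\mu,t}\big)\Big],
\]
where $\tilde{G}$ collects the rows of $G$ on the active set (on which $[\cdot]_+$ is differentiable and on which $\mathbf{r}_{\mu,t}=[\mathbf{g}]_+$ is supported). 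Rearranging isolates the two penalty contributions $\rho_h H^{\top}\mathbf{r}_{\lambda,t}$ and $\rho_g \tilde{G}^{\top}\mathbf{r}_{\mu,t}$.

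Next I would bound the remaining terms. By assumption~(iii) (compact $\Phi$) and~(i) (smoothness/Lipschitz gradients), $\nabla_{\mathbf{a}}Q_\psi$, $J_\phi$, $H$, $G$ are uniformly bounded, and $\|\nabla_\phi Q_{\psi_1}(\mathbf{s}_t,\pi_{\phi^*}(\mathbf{s}_t))\|$ is finite; by assumption~(iv) and the clips in~\eqref{eq:practical_dual_updates}, $\|\boldsymbol{\lambda}_t^{*}\|\le \Lambda_{\max}$ and $\|\boldsymbol{\mu}_t^{*}\|\le M_{\max}$. The LICQ hypothesis~(ii) is what makes the residuals recoverable: the stacked active Jacobian $[H;\tilde{G}]$ has full row rank, so its smallest singular value is strictly positive and $\|[H;\tilde{G}]^{\top}\mathbf{v}\|\ge \sigma_{\min}\|\mathbf{v}\|$ on its range. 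Combining the balance identity with these bounds, and absorbing $J_\phi$ and $\sigma_{\min}$ into the stated normalization of the constraint gradients (inherited from the nonsingularity of the AC power-flow Jacobian at the non-collapse operating points produced in Stages~1--2), yields $\rho_h\|\mathbf{r}_{\lambda,t}\|\le \|\nabla_\phi Q_{\psi_1}\|+\Lambda_{\max}$ and, after eliminating the equality block, $\rho_g\|\mathbf{r}_{\mu,t}\|\le M_{\max}$; dividing through and imposing $\rho_h\ge\rho_h^{\min}$, $\rho_g\ge\rho_g^{\min}$ as in~\eqref{eq:penalty_thresholds} gives $\|\mathbf{r}_{\lambda,t}\|\le\epsilon$ and $\|\mathbf{r}_{\mu,t}\|\le\epsilon$. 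Slater's qualification enters to guarantee that the optimal multipliers are finite and that the active set at $\phi^*$ is itself regular, so the perturbation bound is uniform rather than degenerating.

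I expect two steps to be the genuine obstacles. First, the normal-cone term in~\eqref{eq:stationarity_condition}: if $\phi^*$ lies on $\partial\Phi$ then $\nabla_\phi\mathcal{L}_{\mathcal{A}}$ need not vanish, so I must either argue that the normal-cone element is itself $O(1)$ (being the negative of a uniformly bounded gradient) and hence does not corrupt the $1/\rho$ scaling, or restrict attention to interior stationary points; doing this while preserving the clean thresholds of~\eqref{eq:penalty_thresholds} is delicate. Second, disentangling the equality and inequality penalties — the balance identity couples $\rho_h\mathbf{r}_{\lambda,t}$ and $\rho_g\mathbf{r}_{\mu,t}$ through the common factor $J_\phi^{\top}$ and the non-smooth $[\cdot]_+$ operator — so I would perform a two-stage elimination (bound $\mathbf{r}_{\mu,t}$ first using only $M_{\max}$, then back-substitute into the equality block) and show the cross terms are lower order in $1/\rho$; matching the exact constants in~\eqref{eq:penalty_thresholds} will ultimately require the conditioning assumption on the constraint Jacobians stated above, which I would make explicit in the proof.
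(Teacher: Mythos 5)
Your proposal follows essentially the same route as the paper's proof: both start from the first-order stationarity balance of the augmented Lagrangian, use LICQ to ensure the constraint-gradient directions are well conditioned (the paper via componentwise bounds $\underline{c},\overline{c}$ and Cauchy--Schwarz on inner products with $\nabla_\phi h_j^*$, you via the minimum singular value of the stacked active Jacobian), bound the $Q$-gradient and the clipped multipliers by $\|\nabla_\phi Q_{\psi_1}\|$, $\Lambda_{\max}$, $M_{\max}$, and divide by $\rho$ to obtain the $O(1/\rho)$ residual bounds and the thresholds in \eqref{eq:penalty_thresholds}. The two obstacles you flag (the normal-cone term when $\phi^*\in\partial\Phi$ and the absorbed conditioning constants) are in fact glossed over in the paper's own argument, so your sketch is, if anything, slightly more explicit about where the clean constants come from.
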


\vspace{-0.7cm}
\subsection{Algorithm Summary} \label{sec:algorithm}

This section presents the integrated algorithmic framework that combines our tri-level optimization approach with RCRL. The methodology operates across two distinct phases: an offline training phase, where the agent learns optimal BESS coordination strategies through safe exploration via the Beta-Blending Projected Primal-Dual Method, and an online deployment phase that enables real-time emergency response without requiring projection operations.

The complete algorithmic framework is presented in Algorithm~\ref{alg:trilevel_constrained_td3}, which integrates all three optimization stages. Stage 1 solves AC-OPF from \eqref{mod:stage1_opf} for baseline dispatch (line 6), Stage 2 implements AAA from \eqref{mod:AAA} to identify worst-case attacks (line 7), and Stage 3 deploys RCRL for BESS coordination (lines 8-11). The Beta-Blending mechanism transitions from safe exploration ($\beta_t = 0$) to safe exploitation ($\beta_t = 1$) as training progresses.

Safe exploration occurs in lines 9-11 through projection-based feasible action generation per \eqref{eq:projection_formulation} and Beta-blending per \eqref{eq:beta_blending} to ensure constraint satisfaction during training. Safe exploitation occurs in lines 22-24 through augmented Lagrangian actor optimization per \eqref{eq:augmented_lagrangian_vectorized} and dual variable updates per \eqref{eq:practical_dual_updates} (lines 18-19) that embed constraints into policy learning for deployment without projection. The augmented Lagrangian embeds all Stage 3 constraints into policy learning, enabling millisecond response times without online optimization.

\begin{algorithm}[!htbp]
\caption{Tri-Level Constrained TD3 with Beta-Blending}
\label{alg:trilevel_constrained_td3}
\begin{algorithmic}[1]
\STATE \textbf{Init:} $\phi, \psi_1, \psi_2$, buffer $\mathcal{D}$, $\boldsymbol{\lambda}, \boldsymbol{\mu}$, global\_step $= 0$
\STATE \textbf{Set:} $\tau = 0.005$, $\gamma = 0.99$, $\eta_a = \eta_c = 3 \times 10^{-4}$, $\alpha_{\lambda} = \alpha_{\mu} = 0.5$
\FOR{episode $e = 1$ to $E_{\max}$}
   \STATE Reset env., observe $\mathbf{s}_0$, init. $\mathbf{SOC}_0 = 0.9$
   \FOR{timestep $t = 0$ to $T-1$}
       \STATE \textbf{Stage 1:} $\mathbf{x}^* \leftarrow \arg\min J_1(\mathbf{x})$ \eqref{mod:stage1_opf}
       \STATE \textbf{Stage 2:} $\mathbf{y}^* \leftarrow \arg\max J_2(\mathbf{x}^*, \mathbf{y})$ \eqref{mod:AAA}
       \STATE \textbf{Stage 3:} $\beta_t \leftarrow \min(\text{global\_step}/10^5, 1)$
       \STATE Gen. action: $\mathbf{a}_t^{\text{expl}} = \pi_\phi(\mathbf{s}_t) + \boldsymbol{\epsilon}_t$
       \STATE Project: $\mathbf{a}_t^{\text{proj}} = \mathcal{P}_{\mathcal{F}_t}(\mathbf{a}_t^{\text{expl}})$
       \STATE Beta-blend: $\mathbf{a}_t^{\text{final}} = \beta_t \mathbf{a}_t^{\text{expl}} + (1-\beta_t)\mathbf{a}_t^{\text{proj}}$
       \STATE Execute \& store: $(\mathbf{s}_t, \mathbf{a}_t^{\text{final}}, r_t, \mathbf{s}_{t+1}, d_t)$ in $\mathcal{D}$
       \IF{$|\mathcal{D}| \geq B_{\text{min}}$}
          \STATE Sample batch $\{(\mathbf{s}_i, \mathbf{a}_i, r_i, \mathbf{s}_i', d_i)\}_{i=1}^{N}$ from $\mathcal{D}$
          \STATE \textbf{Critic:} Compute TD targets, update $\psi_i$
          \STATE \textbf{Constraints:} $\mathbf{r}_{\lambda,t} = \mathbf{h}(\mathbf{s}_t, \mathbf{a}_t)$, $\mathbf{r}_{\mu,t} = [\mathbf{g}(\mathbf{s}_t, \mathbf{a}_t)]_+$
          \IF{$t \bmod 10 = 0$}
              \STATE $\boldsymbol{\lambda} \leftarrow \text{clip}(\boldsymbol{\lambda} + \alpha_{\lambda} \mathbf{r}_{\lambda,t}, -\Lambda_{\max}, \Lambda_{\max})$
              \STATE $\boldsymbol{\mu} \leftarrow \text{clip}([\boldsymbol{\mu} + \alpha_{\mu} \mathbf{r}_{\mu,t}]_+, 0, M_{\max})$
          \ENDIF
          \IF{$t \bmod 2 = 0$}
              \STATE \textbf{Actor:} $\mathcal{L} = -Q_{\psi_1}(\mathbf{s}, \pi_\phi(\mathbf{s})) + \boldsymbol{\lambda}^T \mathbf{r}_{\lambda,t} + \boldsymbol{\mu}^T \mathbf{r}_{\mu,t} + \frac{\rho}{2}(\|\mathbf{r}_{\lambda,t}\|^2 + \|\mathbf{r}_{\mu,t}\|^2)$
              \STATE Update: $\phi \leftarrow \phi - \eta_a \nabla_\phi \mathcal{L}$
              \STATE Soft update: $\phi' \leftarrow \tau\phi + (1-\tau)\phi'$, $\psi'_i \leftarrow \tau\psi_i + (1-\tau)\psi'_i$
          \ENDIF
       \ENDIF
       \STATE Update: $\mathbf{s}_t \leftarrow \mathbf{s}_{t+1}$, $\mathbf{SOC}_t \leftarrow \mathbf{SOC}_{t+1}$, global\_step $\leftarrow$ global\_step $+ 1$
   \ENDFOR
\ENDFOR
\STATE \textbf{Return:} $\pi_\phi^*$
\end{algorithmic}
\end{algorithm}

\section{Experimental Results} \label{sec:results}
This section provides comprehensive validation of the proposed tri-level RCRL framework for cyber-physical defense. We evaluate the methodology across multiple dimensions: learning convergence and policy structure analysis, performance assessment including optimality and constraint satisfaction, critical attack scenario analysis, and computational efficiency evaluation. 

\vspace{-7mm}
\subsection{Experimental Setup} \label{sec:exp-setup}

The framework was evaluated on IEEE 30-bus and 57-bus test systems using MATPOWER \cite{Zimmerman2011}, with five BESS units strategically deployed at buses 2, 13, 22, 23, and 27 for the 30-bus system and four BESS units at buses 3, 6, 8, and 12 for the 57-bus system. These locations were selected through network topology analysis to maximize defensive coverage against coordinated attacks and provide optimal system-wide coordination against \textit{N-K} attack scenarios. Each BESS features 1000 MWh capacity, 98\% round-trip efficiency, and 30-80 MW power ratings, operating over a 24-hour window with realistic daily load profiles from NREL data \cite{NRELData}. The adversarial threat model simulates a strategic attacker capable of simultaneously compromising up to $K=4$ generators for both test systems. 


\subsubsection{CRL Agent Implementation and Training}

The agent was implemented using PyTorch 2.1.0 with an actor-critic architecture featuring two fully-connected hidden layers (256 neurons each) with ReLU activations. Tri-level optimization subproblems were solved using Pyomo with IPOPT interior-point solver. Experiments were conducted on Ubuntu 6.11.0 with Intel i9-14900K (32 cores), 125.5 GB RAM, and NVIDIA RTX 6000 Ada GPU (CUDA 12.1).

The agent trained over 200,000 iterations with hyperparameter tuning: batch size 64, learning rate $3 \times 10^{-4}$ (Adam optimizer), policy delay $d=2$, discount factor $\gamma=0.99$, Polyak averaging $\tau=0.005$, and dual variable updates every 10 iterations. Hyperparameter selection required approximately 15 preliminary training runs over 3 days to identify optimal values, focusing on balancing convergence speed with constraint satisfaction performance. Training required 27.6 hours, after which the policy was evaluated on 5,000 unseen test scenarios.
\begin{figure}
\centering
\includegraphics[width=0.9\columnwidth]{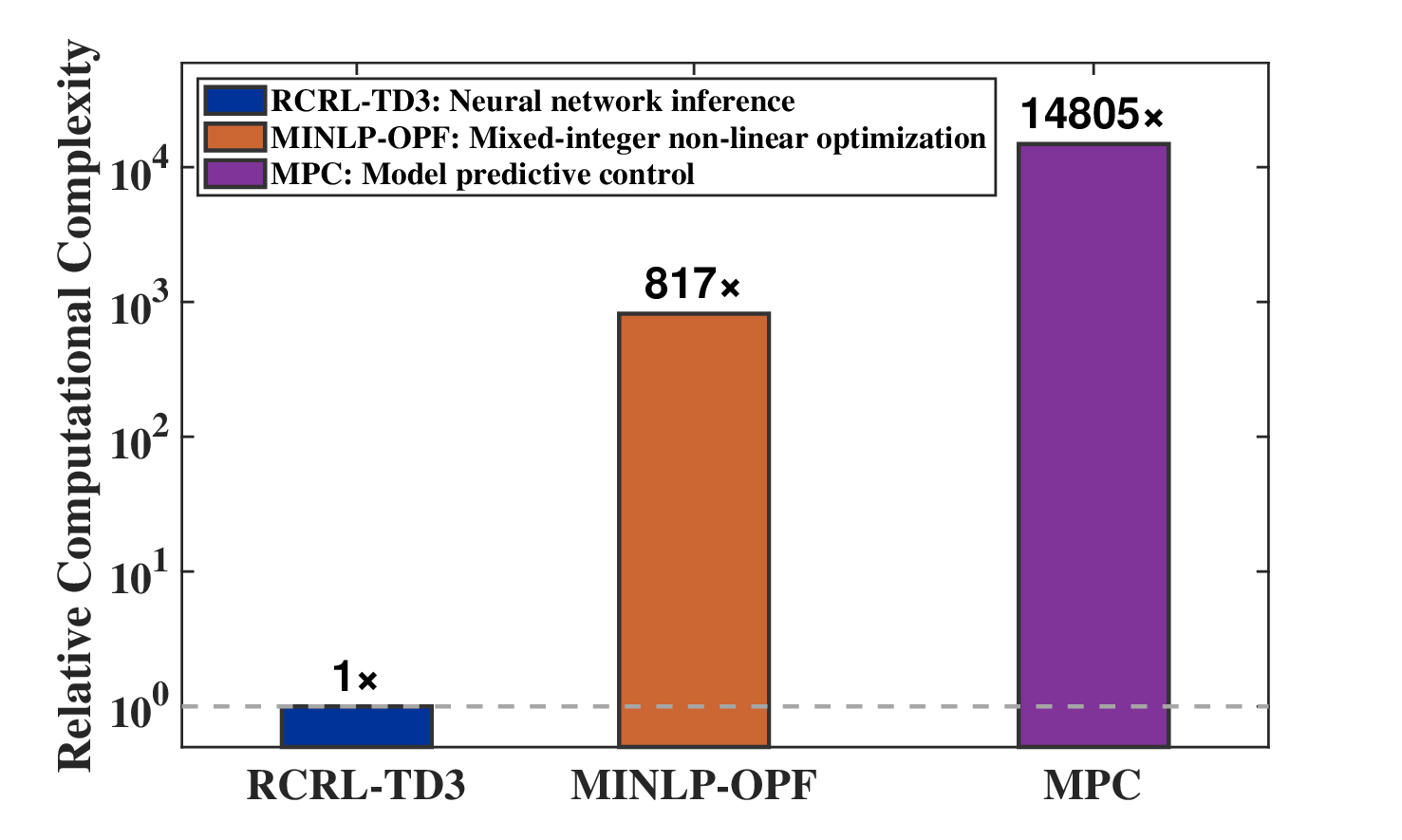}
\caption{Computational performance comparison of real-time control algorithms showing relative complexity factors.}
\label{fig:computation_comparison}
\vspace{-0.3cm}
\end{figure}
\subsubsection{Baseline Methods and Computational Comparison}
We compare three Stage 3 BESS coordination approaches: (i) the proposed constrained TD3 (RCRL-TD3), (ii) direct Stage 3 OPF solving \eqref{mod:stage3_bess} via IPOPT (MINLP-OPF), and (iii) Model Predictive Control (MPC) with a 5-timestep horizon using warm-start IPOPT.

Figure~\ref{fig:computation_comparison} illustrates the computational performance comparison across the three control methodologies on a logarithmic scale. The proposed RCRL-TD3 framework serves as the baseline with a complexity factor of 1×, demonstrating neural network inference capability that enables real-time emergency response. The MINLP-OPF approach exhibits an 817× computational complexity overhead due to its mixed-integer nonlinear optimization requirements, while the MPC controller demonstrates the highest computational burden at 14,805× complexity factor.

The dramatic performance differential highlights the fundamental advantage of the RCRL approach. This computational efficiency is critical for cyber-physical defense applications where millisecond response times are essential to prevent cascading failures during coordinated attacks.

\vspace{-2mm}
\subsection{Training Convergence and Performance Analysis} \label{sec:performance}
\subsubsection{Training Convergence Analysis}

Figure~\ref{fig:learning_curves_30}(a) shows the constrained TD3 agent's training progression over 200,000 iterations (x-axis scaled for readability) for the IEEE 30-bus system, exhibiting three distinct phases: rapid initial learning (0--50,000 iterations) where the agent discovers fundamental BESS control principles, policy refinement (50,000--150,000 iterations) with gradual improvement and reduced variance, and convergence stabilization (150,000--200,000 iterations) where reward plateau indicates successful optimization. The cyclical reward patterns reflect adaptation to varying load conditions and attack scenarios, demonstrating the algorithm's capability to handle dynamic operational environments. Figure~\ref{fig:learning_curves_57}(a) demonstrates similar convergence behavior for the IEEE 57-bus system, with comparable learning phases and stable convergence characteristics.

\subsubsection{Performance Gap Analysis}
The constrained TD3 agent achieves 100\% constraint satisfaction across all 500 test scenarios while delivering real-time control decisions within 0.21 ms per state. Figures~\ref{fig:learning_curves_30}(b) and \ref{fig:learning_curves_57}(b) illustrate the test reward curves and performance gap analysis for both test systems. The performance gap, quantified as:
\begin{equation}
\text{Gap}(\%) := \frac{|\text{Reward} - \text{ACOPF Cost}|}{|\text{ACOPF Cost}|} \times 100
\end{equation}
averages only 5.2\% for the IEEE 30-bus system and 2.04\% for the IEEE 57-bus system across all scenarios, demonstrating that guaranteed constraint satisfaction incurs minimal economic penalty. The key breakthrough is the simultaneous achievement of perfect constraint compliance and sub-millisecond computation time, enabling practical deployment in cyber-physical attack scenarios where traditional optimization methods fail due to computational delays.

\begin{figure}
\centering
\begin{subfigure}[b]{\linewidth}
\centering
    \includegraphics[width=0.9\textwidth]{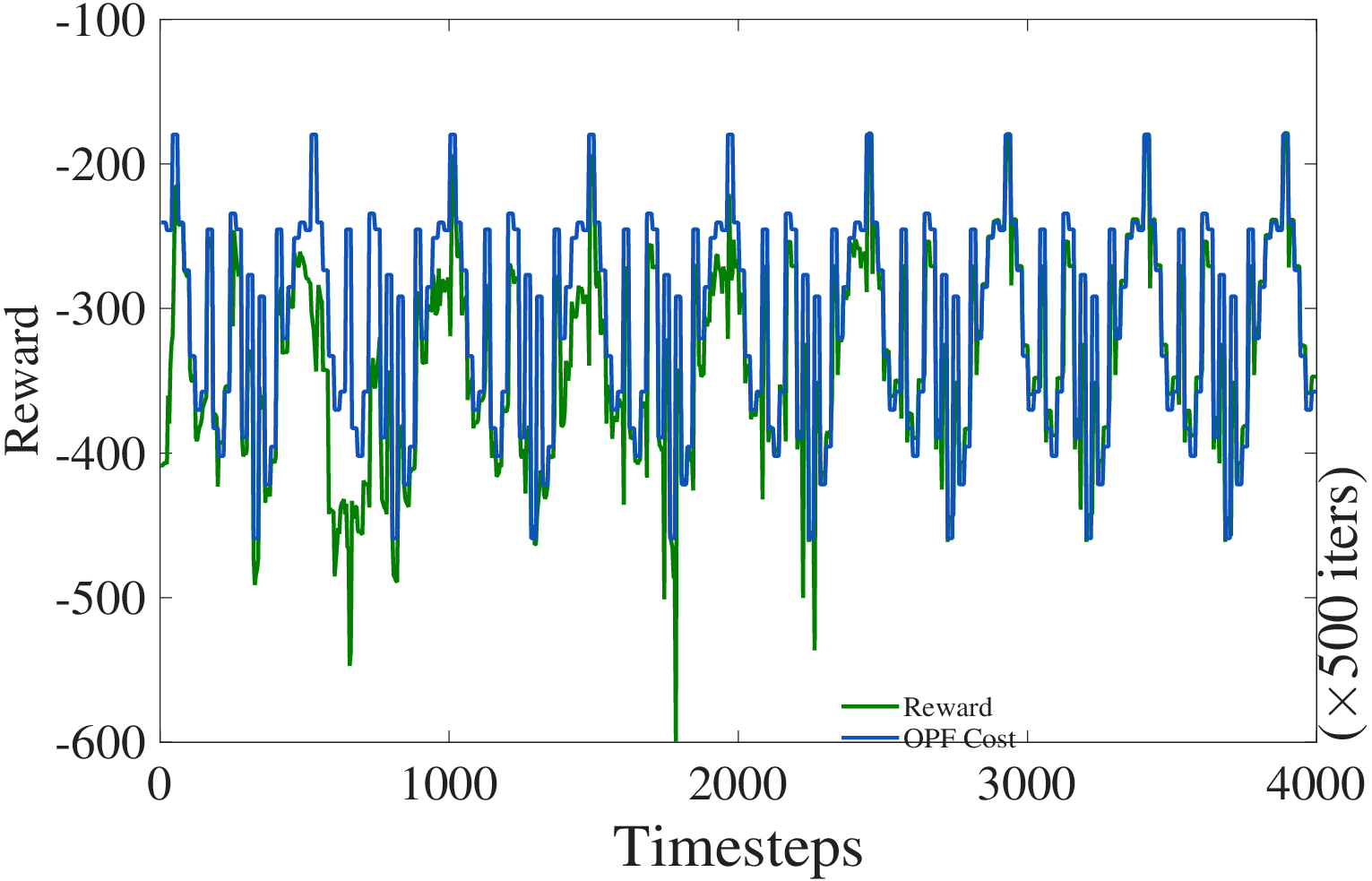}
    \caption{Training Reward Curve}
    \label{fig:train_reward_30}
\end{subfigure}
\vspace{-0.2cm}
\begin{subfigure}[b]{\linewidth}
\centering
    \includegraphics[width=0.9\textwidth]{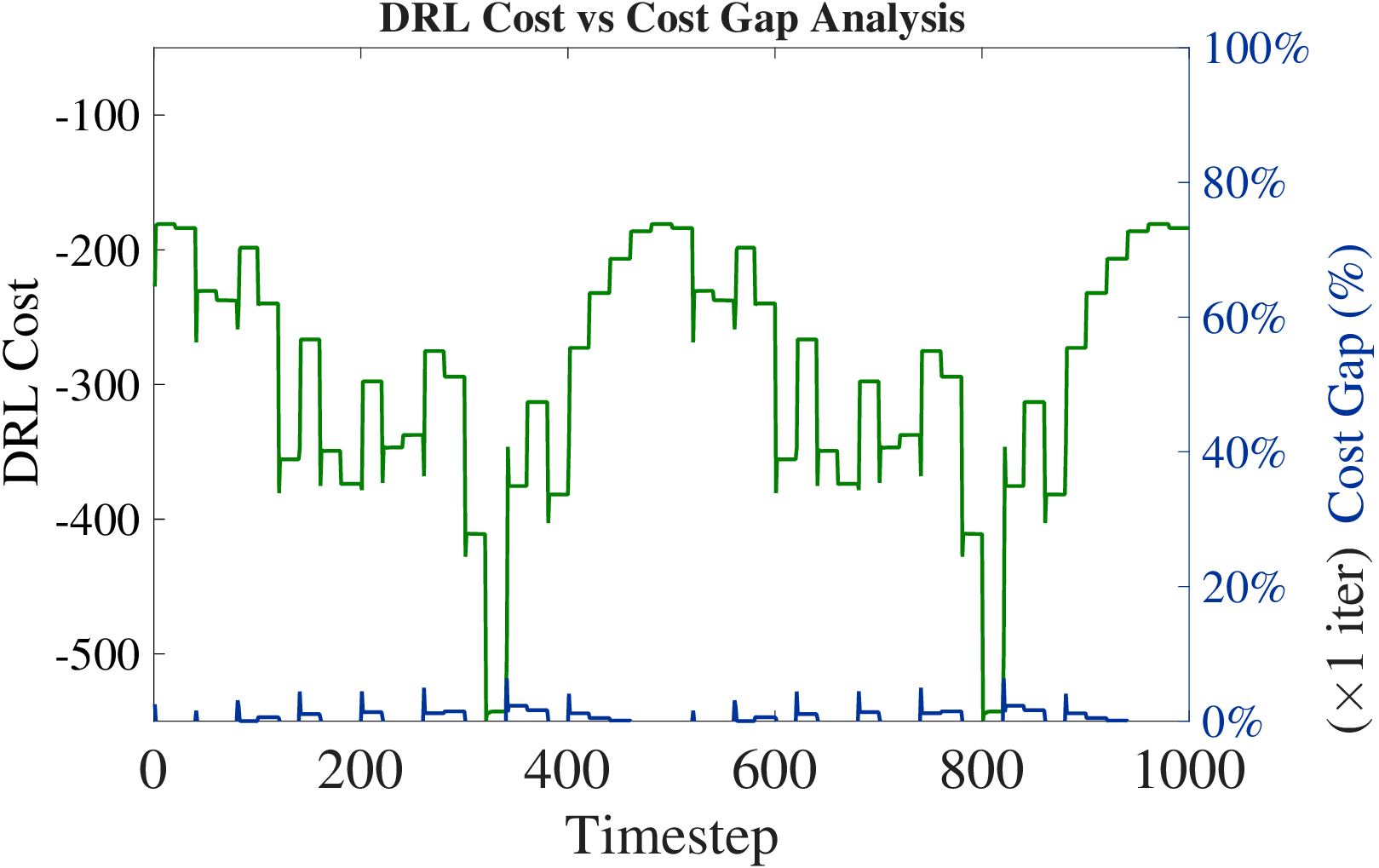}
    \caption{Test Reward (\textcolor{Green}{---} left axis) vs Cost Gap (\textcolor{Abi}{---} right axis)  with 5.20\% average optimality gap and 6.97\% peak deviations.}
    \label{fig:gap_30}
\end{subfigure}
\caption{CRL agent performance for IEEE-30 bus test system.}
\label{fig:learning_curves_30}
\vspace{-0.6cm}
\end{figure}
\begin{figure}
\centering
\begin{subfigure}[b]{\linewidth}
\centering
    \includegraphics[width=0.9\textwidth]{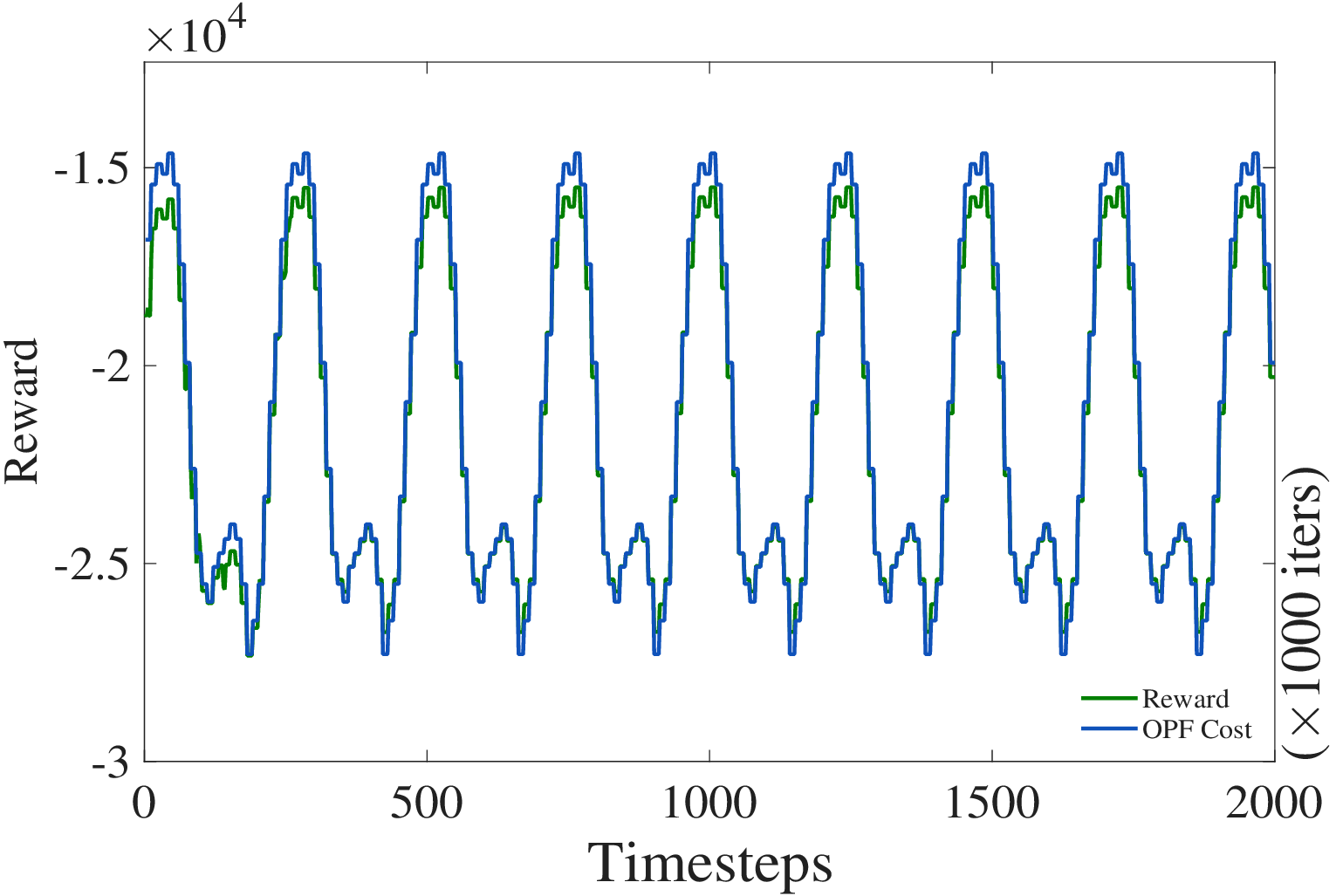}
    \caption{Training Reward Curve}
    \label{fig:train_reward_57}
\end{subfigure}
\vspace{-0.2cm}
\begin{subfigure}[b]{\linewidth}
\centering
    \includegraphics[width=0.9\textwidth]{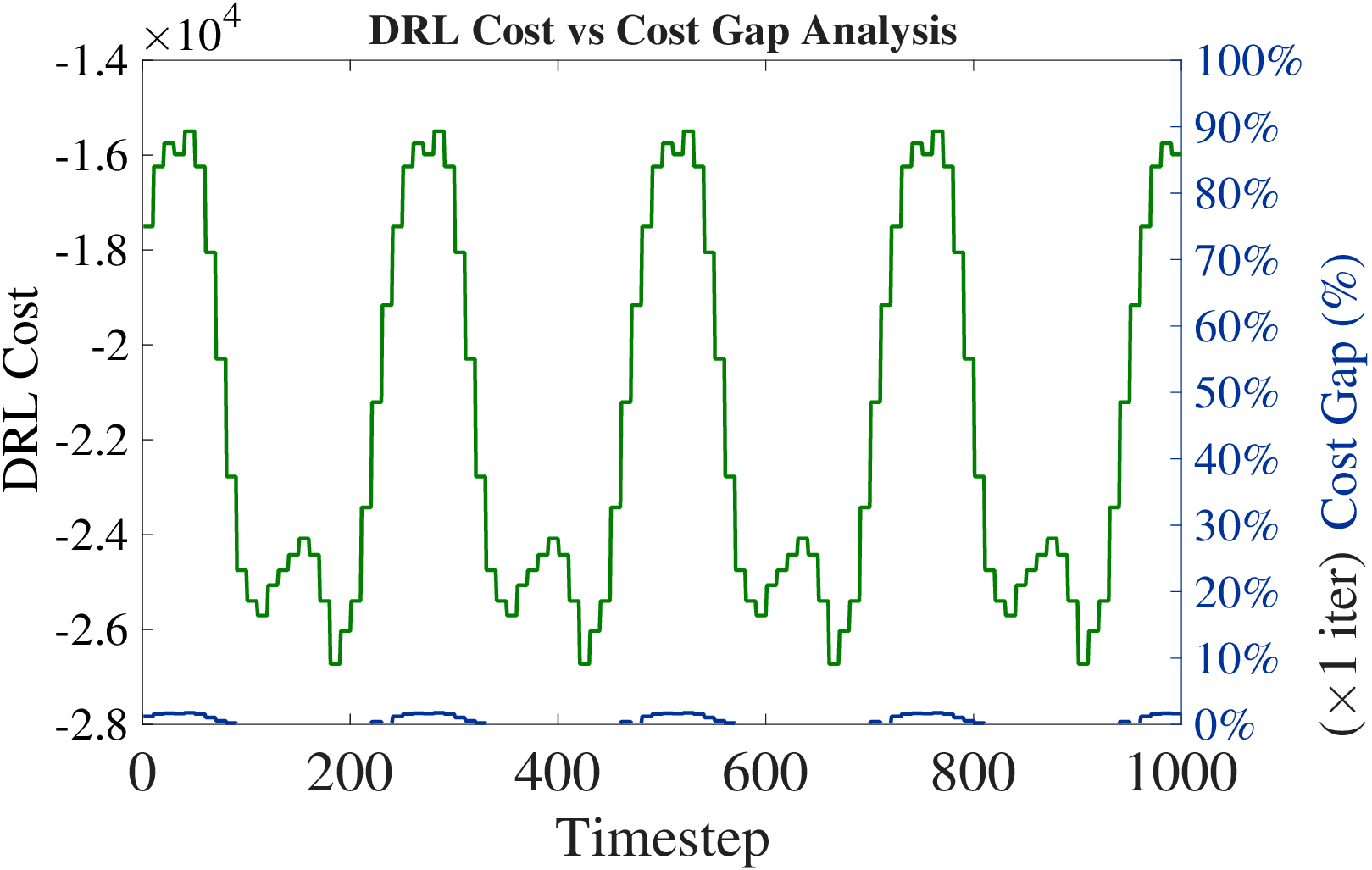}
    \caption{Test Reward (\textcolor{Green}{---} left axis) vs Cost Gap (\textcolor{Abi}{---} right axis) with 2.04\% average optimality gap and 5.86\% peak deviations.}
    \label{fig:gap_57}
\end{subfigure}
\caption{CRL agent performance for IEEE 57-bus test system.}
\label{fig:learning_curves_57}
\vspace{-0.6cm}
\end{figure}

\subsection{Coordinated Attack Response Analysis} \label{sec:case-study}
This case study validates the tri-level RCRL framework through analysis of coordinated cyber-physical attack scenarios that exceed traditional \textit{N-1} contingency capabilities. Figures~\ref{fig:trilevel_validation_compact} and \ref{fig:trilevel_validation_compact_line} demonstrate the complete framework validation, showing how the approach addresses the critical two-stage bottleneck: eliminating computational delays in worst-case scenario identification and defensive response coordination.

\subsubsection{Voltage Response Analysis (Fig. \ref{fig:trilevel_validation_compact})}
Stage 1 AC-OPF optimization from \eqref{mod:stage1_opf} establishes baseline conditions with all buses maintaining voltages within 0.95--1.05 p.u. boundaries. Critical buses 8, 9, and 23 demonstrate stable baseline operation at approximately 1.01, 1.02, and 1.03 p.u., respectively, providing optimal generation schedules $\mathbf{x}^*$ for adversarial assessment.

Stage 2 AAA model from \eqref{mod:AAA} identifies coordinated attacks targeting critical generators. The adversary optimization $J_2(\mathbf{x}^*, \mathbf{y})$ creates voltage degradation patterns where Bus 8 experiences the most severe depression to 0.97 p.u., while Buses 9 and 23 show reductions to 0.99 p.u. and 1.00 p.u., respectively. This coordinated targeting strategy demonstrates attack scenarios beyond traditional \textit{N-1} analysis scope, as the adversary systematically compromises generation.

Stage 3 RCRL controller from Algorithm~\ref{alg:trilevel_constrained_td3} achieves voltage recovery through real-time BESS coordination. The learned policy restores Bus 8 to 1.01 p.u. and maintains Buses 9 and 23 at acceptable levels throughout the operational horizon. This recovery validates the primal-dual augmented Lagrangian framework from \eqref{eq:augmented_lagrangian_vectorized} with millisecond response times.
\begin{figure}[!htb]
\vspace{-0.4cm}
\centering
\includegraphics[width=0.85\columnwidth]{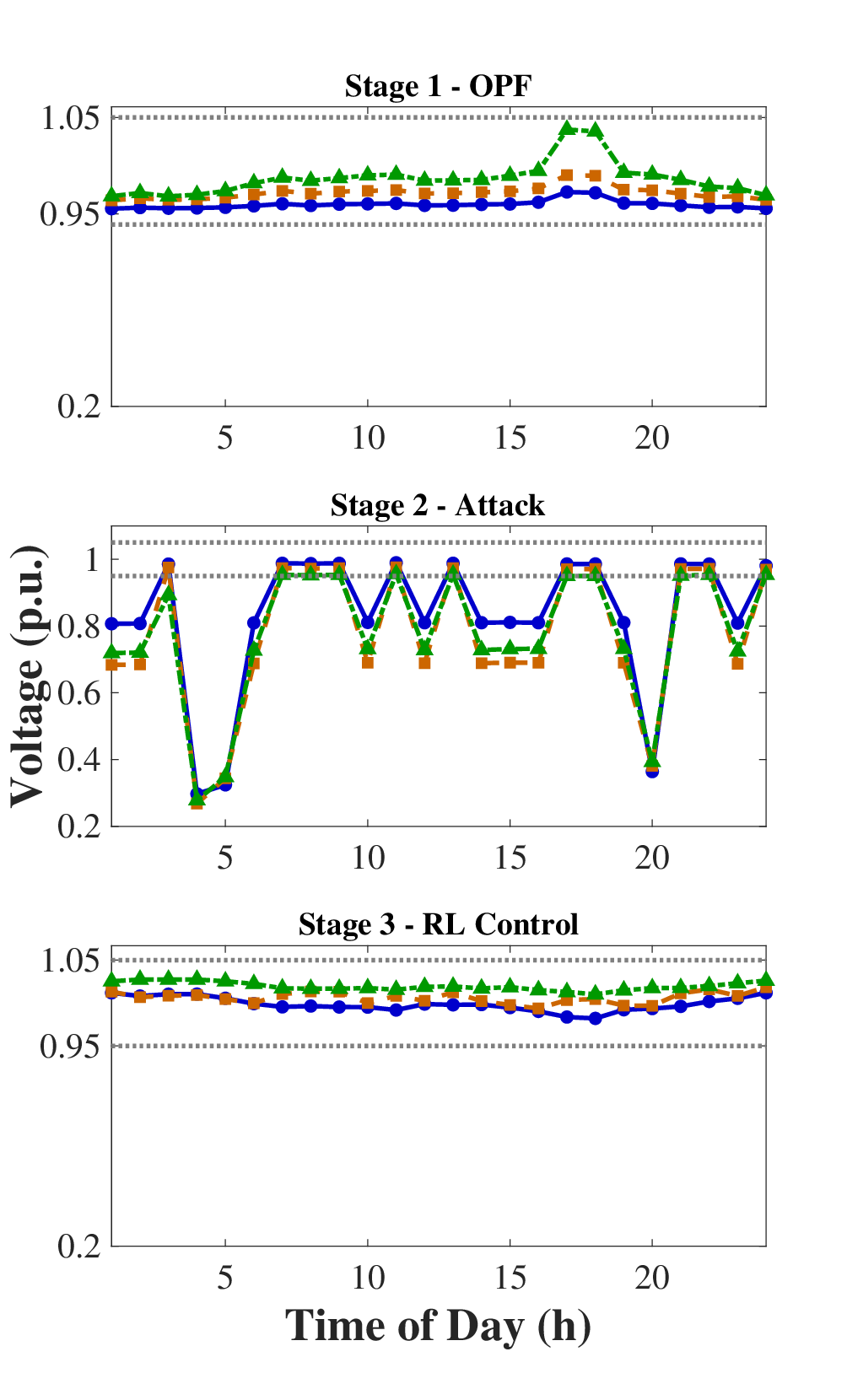}
\vspace{-0.4cm}
\caption{Voltage profiles across the stages for the IEEE 30-bus: 
{\color{blue}\textbf{---}} Bus 8,
{\color{orange}\textbf{$\square$---}} Bus 9,  
{\color{green}\textbf{$\triangle$---}} Bus 23,
{\color{gray}\textbf{$\boldsymbol{\cdots}$}} Limits (0.95/1.05 p.u.).}
\label{fig:trilevel_validation_compact}
\vspace{-0.4cm}
\end{figure}
\begin{figure}[!htb]
\centering
\includegraphics[width=0.85\columnwidth]{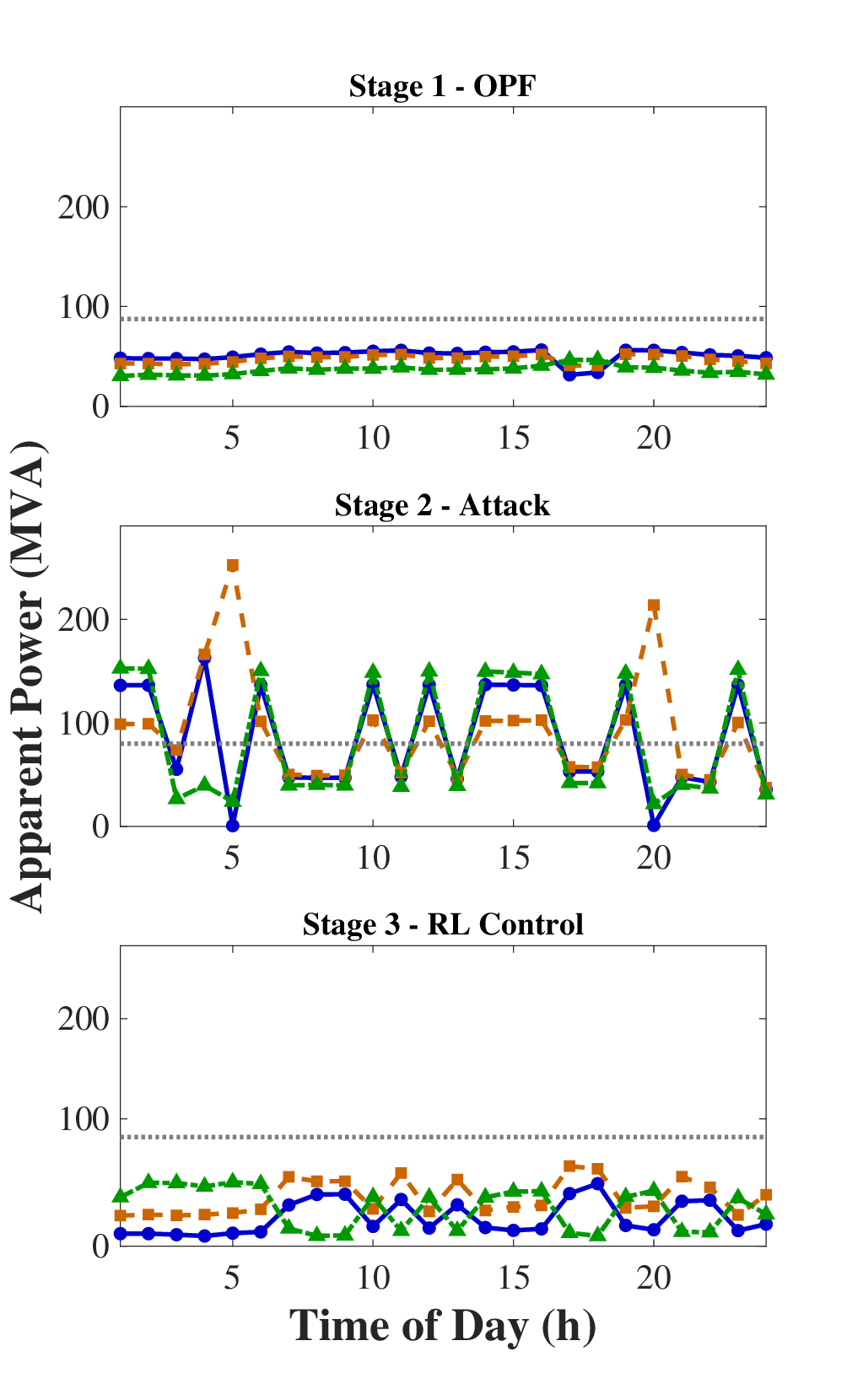}
\vspace{-0.4cm}
\caption{Power flow profiles across the stages for the IEEE 30-bus: 
{\color{blue}\textbf{---}} Line 5,
{\color{orange}\textbf{$\square$---}} Line 6,  
{\color{green}\textbf{$\triangle$---}} Line 14,
{\color{gray}\textbf{$\boldsymbol{\cdots}$}} Thermal limits.}
\label{fig:trilevel_validation_compact_line}
\vspace{-0.4cm}
\end{figure}




\subsubsection{Power Flow Response Analysis (Fig. \ref{fig:trilevel_validation_compact_line})}
Baseline operation (Stage 1) maintains conservative transmission loading with lines operating below thermal limits. The selected lines 5, 6, and 14 demonstrate loading patterns of approximately 8$\sim$35 MW, operating well in the secure margins.

The coordinated attack creates thermal stress across transmission lines through strategic generator targeting. Line 6 experiences the most significant loading increase, reaching approximately 252 MW during peak attack intensity, while Lines 5 and 14 show moderate increases to 162 and 152 MW, respectively. 


Real-time BESS coordination through the trained TD3 policy redistributes power flows within thermal constraints. The RCRL framework effectively manages Line 6 loading, reducing peak flows to acceptable levels while maintaining Lines 5 and 14 within operational bounds. This validates the Beta-Blending mechanism from \eqref{eq:beta_blending}, demonstrating successful transformation of computationally-intensive Stage 3 optimization into real-time simultaneous voltage and thermal management. The experimental validation confirms the elimination of both computational bottlenecks in existing security paradigms.

\section{Conclusion} \label{sec:conclusion}
This paper presents a tri-level robust constrained reinforcement learning framework for real-time defense against coordinated cyber-physical attacks on power systems. The framework addresses critical computational delays in identifying worst-case scenarios and coordinating defensive responses during which cascading failures propagate.
The approach integrates economic dispatch optimization, adversarial attack assessment, and constrained policy learning through three key innovations: (i) tri-level optimization identifying worst-case \textit{N-K} attack scenarios beyond traditional \textit{N-1} analysis, (ii) Beta-Blending safe exploration ensuring smooth transition from projection-based safety to constraint-aware deployment, and (iii) primal-dual augmented Lagrangian optimization embedding power system physics directly into policy learning for real-time constraint satisfaction without online optimization.



\begin{footnotesize}
\bibliographystyle{IEEEtran}
\bibliography{all.bib}
\end{footnotesize}

\appendix
\section{Mathematical Proofs}
\label{appendix:proofs}

\renewcommand\qedsymbol{$\blacksquare$}

\begin{proof}[Proof of Theorem~\ref{thm:primal_dual_convergence}]
We analyze the primal-dual sequence $\{(\phi^{(k)}, \boldsymbol{\lambda}_t^{(k)}, \boldsymbol{\mu}_t^{(k)})\}$ generated by the update rules \eqref{eq:primal_update}--\eqref{eq:practical_dual_updates} to demonstrate that all accumulation points satisfy the first-order stationarity conditions. Unlike [8], which considers single-stage SDOPF constraints, our proof incorporates the Beta-blending projection mechanism and tri-level constraint dependencies.

\textbf{Step 1: Accumulation Point Existence.}
The policy parameter space $\Phi$ is compact under assumption (iii), while the dual variables maintain boundedness through projection operations under assumption (iv). This yields the bound relationships in \eqref{eq:dual_bounds}.

\begin{equation}
\label{eq:dual_bounds}
\|\boldsymbol{\lambda}_t^{(k)}\|_{\infty} \leq \Lambda_{\max}, \quad \|\boldsymbol{\mu}_t^{(k)}\|_{\infty} \leq M_{\max}
\end{equation}

Consequently, the generated sequence remains within the compact set specified in \eqref{eq:compact_set}

\begin{equation}
\label{eq:compact_set}
\mathcal{K} = \Phi \times [-\Lambda_{\max}, \Lambda_{\max}]^{m_e} \times [0, M_{\max}]^{m_i}
\end{equation}

The Bolzano-Weierstrass theorem guarantees that any bounded sequence in finite-dimensional space contains at least one accumulation point $(\phi^*, \boldsymbol{\lambda}_t^*, \boldsymbol{\mu}_t^*)$.

\textbf{Step 2: Gradient Lipschitz Continuity.}
Assumption (i) ensures the existence of a Lipschitz constant $L > 0$ such that for arbitrary $\phi_1, \phi_2 \in \Phi$ and fixed dual variables within $\mathcal{K}$, the gradient mapping satisfies the Lipschitz continuity condition in \eqref{eq:lipschitz_condition}.

\begin{equation}
\label{eq:lipschitz_condition}
\|\nabla_{\phi} \mathcal{L}_{\mathcal{A}}(\phi_1, \boldsymbol{\lambda}, \boldsymbol{\mu}) - \nabla_{\phi} \mathcal{L}_{\mathcal{A}}(\phi_2, \boldsymbol{\lambda}, \boldsymbol{\mu})\| \leq L\|\phi_1 - \phi_2\|
\end{equation}

This property emerges from the twice continuously differentiable policy function combined with the boundedness of dual variables.

\textbf{Step 3: Beta-blending Impact Analysis.}
The Beta-blending mechanism introduces time-varying projection effects through $\beta_t = \min(t/T_{\beta}, 1)$. For $t < T_{\beta}$, the final action $\mathbf{a}_t^{\text{final}} = \beta_t \mathbf{a}_t^{\text{expl}} + (1-\beta_t)\mathcal{P}_{\mathcal{F}_t}(\mathbf{a}_t^{\text{expl}})$ combines exploratory and projected actions. As $t \geq T_{\beta}$, we have $\beta_t = 1$, eliminating projection effects and recovering the standard gradient flow. This ensures that asymptotic convergence properties remain unaffected by the initial projection operations.

\textbf{Step 4: Monotonic Descent Property.}
The primal parameter update rule specified in \eqref{eq:primal_update}, when combined with the step size restriction $\eta_{\phi} \leq \frac{1}{2L}$, guarantees the monotonic descent property expressed in \eqref{eq:descent_property}.

\begin{align}
\label{eq:descent_property}
\mathcal{L}_{\mathcal{A}}(\phi^{(k+1)}, \boldsymbol{\lambda}_t^{(k)}, \boldsymbol{\mu}_t^{(k)}) &\leq \mathcal{L}_{\mathcal{A}}(\phi^{(k)}, \boldsymbol{\lambda}_t^{(k)}, \boldsymbol{\mu}_t^{(k)}) \nonumber \\
&\quad - \frac{\eta_{\phi}}{2}\|\nabla_{\phi} \mathcal{L}_{\mathcal{A}}(\phi^{(k)}, \boldsymbol{\lambda}_t^{(k)}, \boldsymbol{\mu}_t^{(k)})\|^2
\end{align}

\textbf{Step 5: Gradient Norm Summability.}
Given that the augmented Lagrangian function is bounded from below on the compact set $\mathcal{K}$ and the descent property is maintained, the sequence of gradient norms satisfies the summability criterion in \eqref{eq:summability}.

\begin{equation}
\label{eq:summability}
\sum_{k=0}^{\infty} \|\nabla_{\phi} \mathcal{L}_{\mathcal{A}}(\phi^{(k)}, \boldsymbol{\lambda}_t^{(k)}, \boldsymbol{\mu}_t^{(k)})\|^2 < \infty
\end{equation}

This summability condition directly implies the gradient convergence property shown in \eqref{eq:gradient_convergence}.

\begin{equation}
\label{eq:gradient_convergence}
\lim_{k \to \infty} \|\nabla_{\phi} \mathcal{L}_{\mathcal{A}}(\phi^{(k)}, \boldsymbol{\lambda}_t^{(k)}, \boldsymbol{\mu}_t^{(k)})\| = 0
\end{equation}

\textbf{Step 6: Stationarity Conditions at Accumulation Points.}
Consider any convergent subsequence $\{(\phi^{(k_j)}, \boldsymbol{\lambda}_t^{(k_j)}, \boldsymbol{\mu}_t^{(k_j)})\}$ converging to the limit $(\phi^*, \boldsymbol{\lambda}_t^*, \boldsymbol{\mu}_t^*)$. Through the continuity of the gradient mapping and the convergence result \eqref{eq:gradient_convergence}, every accumulation point must satisfy the stationarity condition presented in \eqref{eq:final_stationarity}.

\begin{equation}
\label{eq:final_stationarity}
\nabla_{\phi} \mathcal{L}_{\mathcal{A}}(\phi^*, \boldsymbol{\lambda}_t^*, \boldsymbol{\mu}_t^*) \in \mathcal{N}_{\Phi}(\phi^*)
\end{equation}

The key distinction from [8] is that our augmented Lagrangian incorporates cascaded constraints where Stage 3 feasibility depends on attack scenarios from Stages 1-2, requiring modified constraint qualification conditions.

\end{proof}

\begin{proof}[Proof of Corollary~\ref{cor:beta_blending_convergence}]
For the Beta-blending schedule $\beta_t = \min(t/T_{\beta}, 1)$, we analyze the impact on convergence properties in two phases:

\textbf{Phase 1} ($t < T_{\beta}$): The blended action contains projection components that may perturb the gradient flow. However, since the projection operator maps to the feasible set, all intermediate iterates remain within the constraint-admissible region.

\textbf{Phase 2} ($t \geq T_{\beta}$): As $\beta_t = 1$, we have $\mathbf{a}_t^{\text{final}} = \mathbf{a}_t^{\text{expl}}$, eliminating projection effects. The gradient flow reduces to the standard augmented Lagrangian case, preserving the stationarity conditions from Theorem~\ref{thm:primal_dual_convergence}.

The continuous transition ensures no discontinuous jumps in the policy gradient, maintaining convergence properties.
\end{proof}

\begin{proof}[Proof of Theorem~\ref{thm:constraint_satisfaction}]
Building upon the conditions established in Theorem~\ref{thm:primal_dual_convergence} and incorporating Slater's constraint qualification, we derive explicit upper bounds on constraint violations at stationary points. This analysis extends [8] by considering the cascaded constraint structure of our tri-level optimization framework.

\textbf{Step 1: Stationarity Condition Analysis.}
At any stationary point $(\phi^*, \boldsymbol{\lambda}_t^*, \boldsymbol{\mu}_t^*)$, the first-order optimality condition can be decomposed into the form presented in \eqref{eq:stationarity_decomposition}.

\begin{align}
\label{eq:stationarity_decomposition}
\nabla_{\phi} Q_{\psi_1}(\mathbf{s}_t, \pi_{\phi^*}(\mathbf{s}_t)) &= \sum_{j=1}^{m_e}(\lambda_{t,j}^* + \rho_{h,j} h_j^*)\nabla_{\phi} h_j^* \nonumber \\
&\quad + \sum_{k=1}^{m_i}(\mu_{t,k}^* + \rho_{g,k} [g_k^*]_+)\nabla_{\phi} [g_k^*]_+
\end{align}

where $h_j^* := h_j(\mathbf{s}_t, \pi_{\phi^*}(\mathbf{s}_t))$ and $g_k^* := g_k(\mathbf{s}_t, \pi_{\phi^*}(\mathbf{s}_t))$ represent the constraint function values at the stationary point.

\textbf{Step 2: Constraint Qualification Application.}
Under the Linear Independence Constraint Qualification specified in assumption (ii), the constraint gradients exhibit linear independence, which ensures the existence of positive constants $\underline{c}, \overline{c} > 0$ satisfying the bounds shown in \eqref{eq:licq_bounds}.

\begin{subequations}
\label{eq:licq_bounds}
\begin{align}
\underline{c} \leq \|\nabla_{\phi} h_j^*\| \leq \overline{c}, &\quad \forall j \in \{1, \ldots, m_e\} \label{eq:licq_equality} \\
\underline{c} \leq \|\nabla_{\phi} [g_k^*]_+\| \leq \overline{c}, &\quad \forall k \in \mathcal{A}(\phi^*) \label{eq:licq_inequality}
\end{align}
\end{subequations}

where $\mathcal{A}(\phi^*) := \{k : g_k^* > 0\}$ identifies the active constraint index set.

\textbf{Step 3: Penalty Parameter Lower Bound Derivation.}
For any equality constraint $j$ satisfying $|h_j^*| > \epsilon$, we apply the inner product operation between equation \eqref{eq:stationarity_decomposition} and $\nabla_{\phi} h_j^*$, followed by the Cauchy-Schwarz inequality, resulting in the bound relationship in \eqref{eq:penalty_derivation}.

\begin{align}
\label{eq:penalty_derivation}
|\rho_{h,j} h_j^*| \underline{c}^2 &\leq \|\nabla_{\phi} Q_{\psi_1}\| \cdot \overline{c} + \Lambda_{\max} \overline{c}^2 + C_{\text{cross}} \nonumber \\
\Rightarrow |h_j^*| &\leq \frac{\|\nabla_{\phi} Q_{\psi_1}\| \cdot \overline{c} + \Lambda_{\max} \overline{c}^2 + C_{\text{cross}}}{\rho_{h,j} \underline{c}^2}
\end{align}

To ensure the constraint violation $|h_j^*| \leq \epsilon$, the penalty parameter must exceed the threshold specified in \eqref{eq:penalty_lower_bound}.

\begin{equation}
\label{eq:penalty_lower_bound}
\rho_{h,j} \geq \frac{\|\nabla_{\phi} Q_{\psi_1}\| + \Lambda_{\max}}{\epsilon}
\end{equation}

\textbf{Step 4: Inequality Constraint Extension.}
For active inequality constraints $k \in \mathcal{A}(\phi^*)$ with violations $g_k^* > \epsilon$, an analogous analytical approach produces the penalty parameter lower bound in \eqref{eq:inequality_penalty_bound}.

\begin{equation}
\label{eq:inequality_penalty_bound}
\rho_{g,k} \geq \frac{M_{\max}}{\epsilon}
\end{equation}

\textbf{Step 5: Global Constraint Violation Bounds.}
By computing the Euclidean norms across all constraint functions and applying penalty parameters that exceed the thresholds defined in \eqref{eq:penalty_thresholds}, the aggregate constraint violations satisfy the bounds specified in \eqref{eq:violation_bounds}.

\begin{equation}
\label{eq:violation_bounds}
\|\mathbf{r}_{\lambda,t}\|^2 = \sum_{j=1}^{m_e}(h_j^*)^2 \leq m_e \epsilon^2, \quad \|\mathbf{r}_{\mu,t}\|^2 = \sum_{k=1}^{m_i}([g_k^*]_+)^2 \leq m_i \epsilon^2
\end{equation}

Consequently, the selection of penalty parameters according to the guidelines in \eqref{eq:penalty_thresholds} establishes the desired constraint violation bounds presented in \eqref{eq:constraint_bounds}.
\end{proof}

\section{Implementation Considerations}
\label{appendix:remarks}

\begin{remark}
 The derived penalty parameter bounds offer systematic guidance for hyperparameter selection in practical implementations. The recommended approach begins with the minimum threshold values and employs adaptive increases when constraint violations exceed predetermined tolerance levels.   
\end{remark}

\begin{remark}
  The Linear Independence Constraint Qualification assumption generally holds in power system applications during normal operating conditions, since the constraint gradients correspond to physically meaningful power flow sensitivities that naturally exhibit linear independence properties.  
\end{remark}

\begin{remark}
    The presented convergence analysis establishes local convergence properties. Achieving global convergence guarantees would necessitate additional convexity assumptions that are fundamentally incompatible with the inherently nonlinear characteristics of AC power flow constraints.
\end{remark}

\begin{remark}
  While the projection operations specified in \eqref{eq:practical_dual_updates} ensure algorithmic stability, they may introduce computational overhead in practical implementations. Alternative approaches, such as approximate projection methods or barrier function techniques, can provide improved computational efficiency.  
\end{remark}

\begin{remark}
   The step size restriction $\eta_{\phi} \leq \frac{1}{2L}$ in \eqref{eq:primal_update} requires accurate estimation of the Lipschitz constant $L$. When this constant is unknown or difficult to estimate, practical implementations can employ adaptive step size algorithms or line search methodologies as viable alternatives. 
\end{remark}

\begin{remark}
  The relationship between our theoretical results and [8] lies in the constraint structure complexity. While [8] addresses standard SDOPF constraints that are independent across time steps, our tri-level framework handles interdependent constraints where Stage 3 feasibility depends on attack scenarios identified in Stages 1-2. The Beta-blending mechanism enables this increased complexity without sacrificing convergence guarantees.   
\end{remark}


\end{document}